\newenvironment{proofsketch}{\trivlist\item[]\emph{ Sketch of proof}:}%
{\unskip\nobreak\hskip 1em plus 1fil\nobreak$\Box$
\parfillskip=0pt%
\endtrivlist}
\newcommand{\tr}{\mathrm{Tr}\,}
\newcommand{\mrnhat}{m_{\underline{\hat{\bf R}}_N}}
\newcommand{\mn}{\underline{m}_{N}}
\newcommand{\rnhat}{\hat{\underline{\bf R}}_N }
\newtheorem{theorem}{Theorem}
\newtheorem{lemma}{Lemma}
\newtheorem{proposition}{Proposition}
\newtheorem{remark}{Remark}
\newtheorem{assumption}{\bf{Assumption}}
\begin{document}
\bibliographystyle{IEEEtran}
\title{Estimation of the Covariance Matrix of Large Dimensional Data} 
\author{Jianfeng Yao, Abla Kammoun, Jamal Najim\\
\today
\thanks{Yao, Kammoun, Najim are with T\'el\'ecom Paristech, France.}

\thanks{{\tt \{yao,kammoun,najim\}@telecom-paristech.fr}\ .}
\thanks{Najim is also in Centre Nationale de la Recherche Scientifique (CNRS). }
}

\maketitle

\begin{abstract}

  This paper deals with the problem of estimating the covariance
  matrix of a series of independent multivariate observations, in the
  case where the dimension of each observation is of the same order as
  the number of observations. Although such a regime is of interest for
  many current statistical signal processing and wireless
  communication issues, traditional methods fail to produce consistent
  estimators and only recently results relying on large random matrix
  theory have been unveiled.

  In this paper, we develop the parametric framework proposed by
  Mestre, and consider a model where the covariance matrix to be estimated has a (known)
  finite number of eigenvalues, each of it with an unknown
  multiplicity. The main contributions of this work are essentially
  threefold with respect to existing results,
  and in particular to Mestre's work: To relax the (restrictive)
  separability assumption, to provide joint consistent estimates for
  the eigenvalues and their multiplicities, and to study the variance
  error by means of a Central Limit theorem.


\end{abstract}

\section{Introduction}

Estimating the covariance matrix of a series of independent
multivariate observations is a crucial issue in many signal processing
applications. A reliable estimate of the covariance matrix is for
instance needed in principal component analysis \cite{Jollife86},
direction of arrival estimation for antenna arrays \cite{SCH86}, blind
subspace estimation \cite{moulines95}, capacity estimation
\cite{FOS98}, estimation/detection procedures \cite{SCH86,WAX85}, etc.

In the case where the dimension $N$ of the observations is small
compared to the number $M$ of observations, then the empirical
covariance matrix based on the observations often provides a good
estimate for the unknown covariance matrix. This estimate becomes
however much less accurate, and even not consistent with the dimension
$N$ getting higher (see for instance \cite[Theorem 2]{MES08}).

An interesting theoretical framework for modern estimation of
multi-dimensional variables occurs whenever the number of available
samples $M$ grows at the same pace as the dimension $N$ of the
considered variables. Shifting to this new assumption induces
fundamental differences in the behavior of the empirical covariance
matrix as analyzed in Mestre's work \cite{MES08,MES08b}.  Recently,
several attempts have been done to address this problem
(cf. \cite{MES08,MES08b,KAR08,CHE10}) using large random matrix theory which
proposed powerful tools, mainly spurred by the G-estimators of Girko \cite{GIR00},
to cope with this new context. This was for instance the
main ingredient used in \cite{KAR08} and \cite{ledoit}, where
grid-based techniques for inverting the Mar\v{c}enko-Pastur equation were
proposed.

In this article, we shall consider the case where the dimension of each observation $N$ together with the sample dimension $M$ go to infinity at the same pace, i.e. that their ratio converges to some nonnegative constant $c>0$. In order to present the contribution provided in this paper, let us describe the model under study.

Consider a $N\times M$ matrix ${\bf X}_N=(X_{ij})$ whose entries are
independent and identically distributed (i.i.d.) random variables. Let ${\bf R}_N$ be a $N\times N$ Hermitian
matrix with $L$ ($L$ being fixed and known) distinct eigenvalues $0<\rho_1<\cdots<
\rho_L$ with respective multiplicities $N_1,\cdots, N_L$ (notice that
$\sum_{i=1}^L N_i = N$). Consider now
$$
{\bf Y}_N = {\bf R}_N^{1/2} {\bf X}_N\ .
$$
The matrix ${\bf Y}_N=[{\bf y}_1,\cdots,{\bf y}_M]$ is the
concatenation of $M$ independent observations, where each observation
writes ${\bf y}_i = {\bf R}_N^{1/2}{\bf x}_i$ with ${\bf X}_N= [{\bf
  x}_1,\cdots, {\bf x}_M]$.  In particular, the covariance matrix of
each observation ${\bf y}_i$ is ${\bf R}_N =\mathbb{E} {\bf y}_i {\bf
  y}_i^H$ (matrix ${\bf R}_N$ is sometimes called the population
covariance matrix).


We consider the problem of estimating individually the eigenvalues $\rho_i$ as well as their multiplicities $N_i$.  Among the proposed parametric techniques, we cite the one developed by Mestre \cite{MES08b} and taken up by Vallet {\it et al} \cite{LOU10} and Couillet {\it et al} \cite{COU10b} for more elaborated models. Although being computationally efficient, this technique requires a {\em separability condition}, namely the assumption that the number of samples is large compared to the dimension of each sample (small limiting ratio $c=\lim \frac NM >0$). In such a case, the limiting spectrum of the empirical covariance matrix possesses as many clusters\footnote{By cluster, we mean a connex component of the support of the limiting probability distribution of the spectrum.}as there are eigenvalues to be estimated, and each eigenvalue can be estimated by a contour integral surrounding the related cluster. Mestre's technique cannot be applied anymore in the case where $c$ is larger (which reflects a higher dimension of the observation dimension with respect to the sample dimension). In fact, the dimension of the clusters may grow and neighbouring clusters may merge, violating the one-to-one correspondence between clusters and eigenvalues to be estimated (see for instance Fig. \ref{fig:clusters1} and \ref{fig:clusters2}). 

\begin{minipage}[c]{0.43\textwidth}
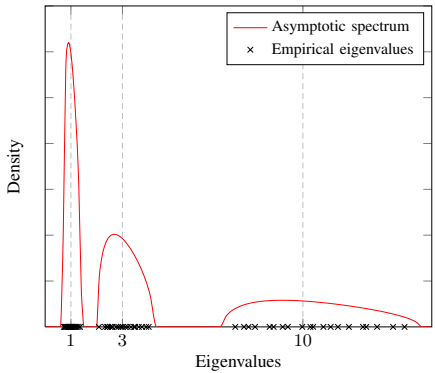
\begin{figure}[H]
  \begin{center}
  \begin{tikzpicture}[font=\footnotesize,scale=0.75]
    \renewcommand{\axisdefaulttryminticks}{8}
    \tikzstyle{every major grid}+=[style=densely dashed]
\pgfplotsset{every axis y label/.append style={yshift=-20pt}}
\pgfplotsset{every axis x label/.append style={yshift=5pt}}
    \tikzstyle{every axis legend}+=[cells={anchor=west},fill=white,
        at={(0.98,0.98)}, anchor=north east, font=\scriptsize ]

    \begin{axis}[
      xmajorgrids=true,
      xlabel={Eigenvalues},
      ylabel={Density},
      xtick = {1,3,10},
      ytick = {},
      yticklabels = {},
      xmin=0,
      xmax=15,
      ymin=0,
      ymax=0.07,
      ]

      \addplot[smooth,red,line width=0.5pt] plot coordinates{
(0,0) (0.301000,-0.000000) (0.401000,-0.000000) (0.501000,-0.000000) (0.601000,-0.000000) (0.701000,0.019786) (0.801000,0.056237) (0.901000,0.061934) (1.001000,0.059629) (1.101000,0.053338) (1.201000,0.044219) (1.301000,0.031803) (1.401000,0.008620) (1.501000,-0.000000) (1.601000,-0.000000) (1.701000,-0.000000) (1.801000,0.000000) (1.901000,0.000000) (2.001000,0.000000) (2.101000,0.010929) (2.201000,0.015193) (2.301000,0.017541) (2.401000,0.018929) (2.501000,0.019721) (2.601000,0.020103) (2.701000,0.020184) (2.801000,0.020037) (2.901000,0.019709) (3.001000,0.019234) (3.101000,0.018636) (3.201000,0.017929) (3.301000,0.017125) (3.401000,0.016229) (3.501000,0.015242) (3.601000,0.014157) (3.701000,0.012965) (3.801000,0.011643) (3.901000,0.010150) (4.001000,0.008405) (4.101000,0.006210) (4.201000,0.002623) (4.301000,-0.000000) (4.401000,-0.000000) (4.501000,-0.000000) (4.601000,-0.000000) (4.701000,-0.000000) (4.801000,-0.000000) (4.901000,-0.000000) (5.001000,-0.000000) (5.101000,-0.000000) (5.201000,-0.000000) (5.301000,0.000000) (5.401000,0.000000) (5.501000,0.000000) (5.601000,0.000000) (5.701000,0.000000) (5.801000,0.000000) (5.901000,0.000000) (6.001000,0.000000) (6.101000,0.000000) (6.201000,0.000000) (6.301000,0.000000) (6.401000,0.000000) (6.501000,0.000000) (6.601000,0.000000) (6.701000,0.000000) (6.801000,0.000000) (6.901000,0.000939) (7.001000,0.002160) (7.101000,0.002848) (7.201000,0.003352) (7.301000,0.003751) (7.401000,0.004078) (7.501000,0.004353) (7.601000,0.004586) (7.701000,0.004786) (7.801000,0.004958) (7.901000,0.005106) (8.001000,0.005234) (8.101000,0.005344) (8.201000,0.005438) (8.301000,0.005519) (8.401000,0.005587) (8.501000,0.005644) (8.601000,0.005691) (8.701000,0.005729) (8.801000,0.005758) (8.901000,0.005780) (9.001000,0.005794) (9.101000,0.005802) (9.201000,0.005804) (9.301000,0.005801) (9.401000,0.005792) (9.501000,0.005779) (9.601000,0.005760) (9.701000,0.005738) (9.801000,0.005711) (9.901000,0.005681) (10.001000,0.005647) (10.101000,0.005610) (10.201000,0.005570) (10.301000,0.005526) (10.401000,0.005479) (10.501000,0.005430) (10.601000,0.005378) (10.701000,0.005323) (10.801000,0.005266) (10.901000,0.005207) (11.001000,0.005145) (11.101000,0.005081) (11.201000,0.005014) (11.301000,0.004945) (11.401000,0.004874) (11.501000,0.004801) (11.601000,0.004726) (11.701000,0.004648) (11.801000,0.004569) (11.901000,0.004487) (12.001000,0.004403) (12.101000,0.004316) (12.201000,0.004228) (12.301000,0.004136) (12.401000,0.004043) (12.501000,0.003947) (12.601000,0.003848) (12.701000,0.003746) (12.801000,0.003642) (12.901000,0.003534) (13.001000,0.003423) (13.101000,0.003307) (13.201000,0.003188) (13.301000,0.003065) (13.401000,0.002936) (13.501000,0.002802) (13.601000,0.002661) (13.701000,0.002513) (13.801000,0.002356) (13.901000,0.002188) (14.001000,0.002007) (14.101000,0.001809) (14.201000,0.001587) (14.301000,0.001330) (14.401000,0.001011) (14.501000,0.000528) (14.601000,-0.000000) (14.701000,-0.000000) (14.801000,-0.000000) (14.901000,-0.000000)
      };
      \addplot[only marks,mark=x,line width=0.5pt] plot coordinates{
(0.725210,0)(0.761293,0)(0.773428,0)(0.804447,0)(0.850258,0)(0.876619,0)(0.914223,0)(0.936101,0)(0.966339,0)(0.983292,0)(1.000177,0)(1.037538,0)(1.049136,0)(1.078080,0)(1.126212,0)(1.159059,0)(1.193031,0)(1.232553,0)(1.295855,0)(1.357692,0)(2.087068,0)(2.322090,0)(2.404135,0)(2.428502,0)(2.523406,0)(2.591069,0)(2.611899,0)(2.760563,0)(2.842153,0)(2.931724,0)(3.005278,0)(3.114150,0)(3.205625,0)(3.330611,0)(3.448666,0)(3.493621,0)(3.634134,0)(3.752682,0)(3.923243,0)(4.018420,0)(7.376547,0)(7.714738,0)(7.892245,0)(8.144817,0)(8.710341,0)(8.864955,0)(9.216528,0)(9.416072,0)(9.976535,0)(10.291326,0)(10.398824,0)(10.794176,0)(11.077766,0)(11.360634,0)(11.786087,0)(12.331401,0)(12.456995,0)(12.879208,0)(13.498515,0)(13.942644,0)
      };
      \legend{{Asymptotic spectrum},{Empirical eigenvalues}}
    \end{axis}
  \end{tikzpicture}
  \end{center}
  \caption{Empirical and asymptotic eigenvalue distribution of $\hat{{\bf R}}_N$ for $L=3$, $\rho_1=1$, $\rho_2=3$, $\rho_3=10$, $N/M=c=0.1$, $N=60$, $N_1=N_2=N_3=20$.}
  \label{fig:clusters1}
\end{figure}
\end{minipage}\hfill
\begin{minipage}[c]{0.43\textwidth}
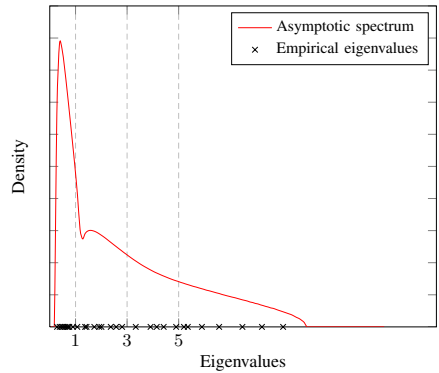
\begin{figure}[H]
  \begin{center}
  \begin{tikzpicture}[font=\footnotesize,scale=0.75]
    \renewcommand{\axisdefaulttryminticks}{8}
    \tikzstyle{every major grid}+=[style=densely dashed]
\pgfplotsset{every axis y label/.append style={yshift=-20pt}}
\pgfplotsset{every axis x label/.append style={yshift=5pt}}
    \tikzstyle{every axis legend}+=[cells={anchor=west},fill=white,
        at={(0.98,0.98)}, anchor=north east, font=\scriptsize ]

    \begin{axis}[
      xmajorgrids=true,
      xlabel={Eigenvalues},
      ylabel={Density},
      xtick = {1,3,5},
      ytick = {},
      yticklabels = {},
      xmin=0,
      xmax=15,
      ymin=0,
      ymax=0.2,
      ]

      \addplot[smooth,red,line width=0.5pt] plot coordinates{
	  (0.075128,0.000355)(0.175128,0.000071)(0.275128,0.132883)(0.375128,0.175839)(0.475128,0.173277)(0.575128,0.161351)(0.675128,0.146918)(0.775128,0.131656)(0.875128,0.116048)(0.975128,0.099919)(1.075128,0.082234)(1.175128,0.060802)(1.275128,0.054855)(1.375128,0.058257)(1.475128,0.059739)(1.575128,0.060151)(1.675128,0.059893)(1.775128,0.059315)(1.875128,0.058449)(1.975128,0.057396)(2.075128,0.056302)(2.175128,0.055113)(2.275128,0.053836)(2.375128,0.052578)(2.475128,0.051307)(2.575128,0.050069)(2.675128,0.048806)(2.775128,0.047592)(2.875128,0.046334)(2.975128,0.045169)(3.075128,0.043956)(3.175128,0.042812)(3.275128,0.041680)(3.375128,0.040624)(3.475128,0.039547)(3.575128,0.038535)(3.675128,0.037581)(3.775128,0.036687)(3.875128,0.035787)(3.975128,0.034931)(4.075128,0.034095)(4.175128,0.033378)(4.275128,0.032665)(4.375128,0.031955)(4.475128,0.031309)(4.575128,0.030643)(4.675128,0.030004)(4.775128,0.029406)(4.875128,0.028849)(4.975128,0.028360)(5.075128,0.027766)(5.175128,0.027284)(5.275128,0.026728)(5.375128,0.026250)(5.475128,0.025740)(5.575128,0.025232)(5.675128,0.024832)(5.775128,0.024340)(5.875128,0.023875)(5.975128,0.023431)(6.075128,0.022899)(6.175128,0.022534)(6.275128,0.022093)(6.375128,0.021636)(6.475128,0.021177)(6.575128,0.020783)(6.675128,0.020246)(6.775128,0.019822)(6.875128,0.019377)(6.975128,0.018982)(7.075128,0.018554)(7.175128,0.018207)(7.275128,0.017759)(7.375128,0.017301)(7.475128,0.016852)(7.575128,0.016480)(7.675128,0.016062)(7.775128,0.015484)(7.875128,0.015170)(7.975128,0.014623)(8.075128,0.014313)(8.175128,0.013729)(8.275128,0.013293)(8.375128,0.012778)(8.475128,0.012377)(8.575128,0.011821)(8.675128,0.011524)(8.775128,0.010933)(8.875128,0.010356)(8.975128,0.009823)(9.075128,0.009280)(9.175128,0.008727)(9.275128,0.008158)(9.375128,0.007597)(9.475128,0.006782)(9.575128,0.005900)(9.675128,0.005193)(9.775128,0.004054)(9.875128,0.002703)(9.975128,0.000000)(10.075128,0.000000)(10.175128,0.000000)(10.275128,0.000000)(10.375128,0.000000)(10.475128,0.000000)(10.575128,0.000000)(10.675128,0.000000)(10.775128,0.000000)(10.875128,0.000000)(10.975128,0.000000)(11.075128,0.000000)(11.175128,0.000000)(11.275128,0.000000)(11.375128,0.000000)(11.475128,0.000000)(11.575128,0.000000)(11.675128,0.000000)(11.775128,0.000000)(11.875128,0.000000)(11.975128,0.000000)(12.075128,0.000000)(12.175128,0.000000)(12.275128,0.000000)(12.375128,0.000000)(12.475128,0.000000)(12.575128,0.000000)(12.675128,0.000000)(12.775128,0.000000)(12.875128,0.000000)(12.975128,0.000000)
      };
      \addplot[only marks,mark=x,line width=0.5pt] plot coordinates{
(0.283395,0.000000)(0.384356,0.000000)(0.452983,0.000000)(0.514274,0.000000)(0.584669,0.000000)(0.660092,0.000000)(0.704346,0.000000)(0.756649,0.000000)(0.899373,0.000000)(1.068115,0.000000)(1.371647,0.000000)(1.427020,0.000000)(1.729349,0.000000)(1.921773,0.000000)(2.004953,0.000000)(2.366641,0.000000)(2.577564,0.000000)(2.807198,0.000000)(3.335924,0.000000)(3.910238,0.000000)(4.150556,0.000000)(4.419376,0.000000)(4.903128,0.000000)(5.206239,0.000000)(5.355599,0.000000)(5.902938,0.000000)(6.571948,0.000000)(7.474017,0.000000)(8.231752,0.000000)(9.052991,0.000000)      };
      \legend{{Asymptotic spectrum},{Empirical eigenvalues}}
    \end{axis}
  \end{tikzpicture}
  \end{center}
  \caption{Empirical and asymptotic eigenvalue distribution of $\hat{{\bf R}}_N$ for $L=3$, $\rho_1=1$, $\rho_2=3$, $\rho_3=5$, $N/M=c=3/8$, $N=30$, $N_1=N_2=N_3=10$.}
  \label{fig:clusters2}
\end{figure}
\end{minipage}
\vspace{0.5cm}

A way to circumvent the separability condition has recently been proposed by Bai, Chen and Yao \cite{chen}, based on the use of the empirical asymptotic moments:
$$
\hat{\alpha}_k=\frac{1}{M}\tr\left({\bf Y}_N {\bf Y}_N\right)^k, k\in\left\{1,\cdots,2L\right\},
$$
which can be shown to be a  sufficient statistics to estimate $\left(\frac{N_1}{N},\cdots,\frac{N_L}{N},\rho_1,\cdots,\rho_L\right)$. Although being robust to separability condition, this technique suffers from numerical difficulties, since the proposed estimator has no closed-form expression and thus should be determined numerically. An interesting contribution, although not directly focused on estimating the covariance of the observations is the work by Rubio and Mestre \cite{rubio09-SP}, where an alternative way to estimates the moments  
$$\gamma_k= \frac{1}{N}\mathrm{Tr} ({\bf R}_N^k),$$ for all ${k \in \mathbb N}$ is proposed, yielding an explicit (yet lengthy) formula. 

In this paper, we improve existing work in several directions: With
respect to Mestre's seminal papers \cite{MES08,MES08b}, we propose a
joint estimation of the eigenvalues and their multiplicities, and drop
the separability assumption. The proposed estimator is close in spirit
to the one in \cite{chen}, although we carefully establish the
existence and uniqueness of the estimator, a fact that is not explicit
in \cite{chen} (we shall also mention a close ongoing work by Li and
Yao, not yet disclosed to our knowledge) . Finally, we study the
fluctuations of the estimator and establish a Central Limit theorem.

The remainder of the paper is organized as follows. In Section
\ref{sec:mes}, the main assumptions are provided and Mestre's
estimator \cite{MES08b} is briefly reviewed. In Section
\ref{sec:estimation}, the proposed estimator is described. Its
fluctuations are studied in Section \ref{sec:fluctuations}, where a
central limit theorem is stated. Simulations are presented in Section
\ref{sec:simulations}, and a discussion ends the paper in Section
\ref{sec:discuss}. Finally, the remaining technical details are
provided in the Appendix.

\section{Main assumptions and general background}
\label{sec:mes}

\subsection{Notations}
In this paper, the notations $s,{\bf x}, {\bf M}$ stand for scalars,
vectors and matrices, respectively. Superscripts $(\cdot)^T$ and
$(\cdot)^H$ respectively stand for the transpose and transpose
conjugate; trace of ${\bf M}$ is denoted by $\mathrm{Tr}( {\bf
  M})$; determinant of ${\bf M}$, by $\mathrm{det}({\bf
  M})$; the mathematical expectation operator, by $\mathbb{E}$. If
$z\in \mathbb{C}$, then $\Re(z)$ and $\Im(z)$ respectively stand for
$z$'s real and imaginary parts, while $\mathbf{i}$ stands for
$\sqrt{-1}$; $\overline{z}$ stands for $z$'s conjugate.

If ${\bf Z} \in \mathbb{C}^{N\times N}$ is a nonnegative Hermitian matrix
with eigenvalues $(\xi_i; \ 1\le i\le N)$, we denote in the sequel by $F^{\bf Z}$
the empirical distribution of its eigenvalues (also called
{\em spectral distribution} of $\bf Z$), {\em i.e.}:
$$
F^{\bf Z}(d\,\lambda) = \frac 1N \sum_{i=1}^N \delta_{\xi_i}(d\,\lambda)\ ,
$$
where $\delta_x$ stands for the Dirac probability measure at $x$.

Convergence in distribution will be denoted by
$\xrightarrow[]{\mathcal D}$, in probability by
$\xrightarrow[]{\mathcal P}$; and almost sure convergence, by
$\xrightarrow[]{a.s.}$.

\subsection{Main assumptions}

Consider the model
$$
{\bf Y}_N = {\bf R}_N^{1/2} {\bf X}_N,\
$$
and

$$\hat{\bf R}_N = \frac{1}{M} {\bf Y}_N {\bf Y}_N ^{H}.$$

At first, an assumption about the matrix ${\bf R}_N$ is needed: 
\begin{assumption} \label{ass:RN}

${\bf R}_N$ is a $N\times N$ Hermitian non-negative definite
matrix with $L$ ($L$ being fixed) distinct eigenvalues $0<\rho_1<\cdots<
\rho_L$ with respective multiplicities $N_1,\cdots, N_L$ (notice that
$\sum_{i=1}^L N_i = N$).

\end{assumption}

As mentioned earlier, we consider the asymptotic regime where the
number of samples $M$ and the number of variables $N$ grow to infinity
at the same pace, together with the multiplicities of each of ${\bf
  R}_N$'s eigenvalues.

\begin{assumption} \label{ass:NM}    Let $M,N$ be integers such that: \begin{equation} N,M\to \infty\ ,\quad   \textrm{with}\quad  \frac NM \to c\in (0,\infty)\ ,\quad
\textrm{and}\quad \frac {N_i}N \to c_i\in (0,\infty)\ , \ 1\le i\le L. \end{equation}
This assumption will be shortly referred to as $N,M\to \infty$.
\end{assumption}

The following assumption is standard and is sufficient for estimation
purposes.

\begin{assumption} \label{ass:entries}

Let ${\bf X}_N=(X_{ij})$ be a $N\times M$ matrix whose entries are
i.i.d. random variables in $\mathbb C$ such that $\mathbb{E}({\bf
  X}_{1,1})=0$, $\mathbb{E}(|{\bf X}_{1,1}|^2)=1$ with finite fourth
moment: $\mathbb{E}(|{\bf X}_{1,1}|^4) < \infty$.
\end{assumption}

\begin{remark}
In order to establish the fluctuations of this estimator, the
Gaussianity of the entries of ${\bf X}_N$ is needed (although this
technical condition may be removed with substantial extra work).
\end{remark}

{\bf Assumption 3b}: The entries of the $N\times M$ matrix ${\bf X}_N=(X_{ij})$ are i.i.d. standard complex Gaussian variables, {\it i.e.}  $X_{ij} = U +
\mathbf{i} V$, where $U,V$ are both independent real Gaussian random variables
${\mathcal N}(0,\frac{1}{2})$.

It is well-known in large random matrix theory that under Assumptions
\ref{ass:RN}, \ref{ass:NM} and \ref{ass:entries}, $F^{\hat{\bf
    R}_N}$ converges to a limiting probability distribution. In
Mestre's paper \cite{MES08b}, a {\em separability
  condition}\footnote{The precise technical statement of the
  separability condition together with a mathematical interpretation
  are available in \cite{MES08b}, but are not necessary here.} is
needed in order to derive the estimator of ${\bf R}_N$'s eigenvalues:

\begin{assumption}
\label{ass:separability}
The support $\mathcal S$ of the limiting probability
distribution of $F^{\hat{\bf R}_N}$ is composed of $L$ compact connex disjoint
subsets, and not reduced to a singleton. 
\end{assumption}
\begin{remark} Note that when $M<N$, matrix $\hat{\bf R}_N$ is singular and thus admits $(N-M)$ eigenvalues equal to zero. Hence, the limiting spectrum of   $\hat{\bf R}_N$
 has an additional mass in zero with weight $1-\frac{1}{c}$, which will not be considered among the $L$ clusters.
\end{remark}

The separability condition is illustrated in Fig. \ref{fig:clusters1}
and \ref{fig:clusters2}. In both figures, the limiting distribution of
$F^{\hat{\bf R}_N}$ is drawn (red line). In Fig. \ref{fig:clusters1},
${\bf R}_N$'s eigenvalues are $\rho_1=1$, $\rho_2=3$, $\rho_3=10$,
they have the same multiplicity and the ratio $c$ is equal to
$0.1$. In this case, the separability condition is satisfied as the
limiting distribution exhibits 3 clusters. The separability condition
is no longer satisfied in Fig. \ref{fig:clusters2}, where $\rho_1=1$,
$\rho_2=3$, $\rho_3=5$ and $c=0.375$, but where the limiting
distribution only exhibits a single cluster.
 
\subsection{Background on Large Random Matrices, Mestre's estimators
  and their fluctuations}

The {\it Stieltjes transform} has proved since Mar\v{c}enko and
Pastur's seminal paper \cite{MAR67} to be extremely efficient to
describe the limiting spectrum of large dimensional random matrices.
Given a probability distribution $\mathbb{P}$ defined over
$\mathbb{R}^+$, its Stieltjes transform is a $\mathbb{C}$-valued
function defined by:
$$
m_{\mathbb{P}}(z)= \int_{\mathbb{R}^+}
\frac{\mathbb{P}(d\lambda)}{\lambda-z}\ , \quad z\in \mathbb{C}
\backslash \mathbb{R}^+\ .
$$
In the case where $F^{\bf Z}$ is the spectral distribution associated
to a nonnegative Hermitian matrix ${\bf Z}\in\mathbb{C}^{N\times N}$
with eigenvalues $(\xi_i; \ 1\le i\le N)$, the Stieltjes transform
$m_{\bf Z}$ of $F^{\bf Z}$ takes the particular form:
\begin{eqnarray*}
  m_{\bf Z}(z) &=&\int \frac{F^{\bf Z}(d\,\lambda)} {\lambda-z}  \\
&=& \frac1N\sum_{i=1}^N \frac1{\xi_i-z} \ = \ \frac 1N \tr \left( {\bf Z} -z{\bf I}_N\right)^{-1}\ ,
\end{eqnarray*}
which is exactly the normalized trace of the resolvent $\left( {\bf Z} -z{\bf I}_N\right)^{-1}$.

An important result associated to the model under investigation here
is Bai and Silverstein's description of the limiting spectral
distribution of $\hat{\bf R}_N$ \cite{SIL95} (see also \cite{MAR67}):

\begin{theorem} \cite{SIL95}
\label{th:stieltjes}
Assume that Assumptions \ref{ass:RN}, \ref{ass:NM}, \ref{ass:entries} hold true and denote by
$F^{\bf R}$ the limiting spectral distribution of ${\bf R}_N$, {\em i.e.}
$F^{\bf R}(d\,\lambda) = \sum_{k=1}^L c_k \delta_{\rho_k}(d\,\lambda)$. The spectral distribution $F^{\hat{\bf
    R}_N}$ of the sample covariance matrix $\hat{\bf R}_N$ converges
(weakly and almost surely) to a probability distribution $F$ as
$M,N\to \infty$, whose Stieltjes transform $m(z)$ satisfies:
$$
m(z)=\frac1c \underline{m}(z) -
\left(1-\frac1c\right)\frac1z\ ,
$$
for $z\in \mathbb{C}^+ = \{ z\in \mathbb{C},\ \Im(z)>0\}$, where $\underline{m}(z)$ is
defined as the unique solution in $\mathbb{C}^+$ of:
$$
\underline{m}(z) = - \left( z - c\int \frac{t}{1+t\underline{m}(z)}dF^{\bf R}(t) \right)^{-1}.
$$

\end{theorem}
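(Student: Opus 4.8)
The plan is to reduce everything to the convergence of Stieltjes transforms on $\mathbb{C}^+$, which by the Stieltjes continuity theorem is equivalent to weak convergence of the spectral measures. First I would introduce the companion matrix $\underline{\hat{\bf R}}_N = \frac1M {\bf X}_N^H {\bf R}_N {\bf X}_N$, an $M\times M$ matrix sharing with $\hat{\bf R}_N$ all its nonzero eigenvalues. Writing $m_N(z) = \frac1N\tr(\hat{\bf R}_N - z{\bf I}_N)^{-1}$ and $\underline m_N(z) = \frac1M\tr(\underline{\hat{\bf R}}_N - z{\bf I}_M)^{-1}$ and matching the two normalized traces (the $N\times N$ matrix carries $N-M$ extra zeros when $N>M$, and symmetrically otherwise), one obtains for every finite $N$ the exact algebraic identity $m_N(z) = \frac1c\underline m_N(z) - (1-\frac1c)\frac1z$ with $N/M$ in place of $c$, which passes to the limit. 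It therefore suffices to prove that $\underline m_N(z)$ converges almost surely, for each $z\in\mathbb{C}^+$, to the solution of the stated Mar\v{c}enko--Pastur equation.

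For the core step I would work with the resolvent $\underline{\bf Q}(z) = (\underline{\hat{\bf R}}_N - z{\bf I}_M)^{-1}$ and extract a self-consistent equation for $\underline m_N(z) = \frac1M\tr\underline{\bf Q}(z)$. Removing the $j$-th column ${\bf x}_j$ of ${\bf X}_N$ and applying the Sherman--Morrison identity, each diagonal entry takes the form $\underline{\bf Q}_{jj} = (-z - z\frac1M {\bf x}_j^H {\bf R}_N^{1/2}{\bf Q}^{(j)}{\bf R}_N^{1/2}{\bf x}_j)^{-1}$, where ${\bf Q}^{(j)}$ is an $N\times N$ resolvent built from the remaining columns, hence independent of ${\bf x}_j$ and of bounded spectral norm for $z\in\mathbb{C}^+$. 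The decisive probabilistic input is the trace lemma of Bai--Silverstein: for ${\bf x}_j$ with i.i.d. unit-variance entries and a matrix ${\bf B}$ independent of it, $\frac1N {\bf x}_j^H {\bf B}{\bf x}_j - \frac1N\tr{\bf B}\to 0$ almost surely with controllable higher moments. This lets me replace the quadratic form by $\frac1M\tr{\bf R}_N{\bf Q}^{(j)}$, and a rank-one stability bound replaces ${\bf Q}^{(j)}$ by the full resolvent, making all diagonal entries asymptotically equal. Closing the relation --- using that $\frac1M\tr{\bf R}_N{\bf Q}(z)$ is asymptotically $c\int \frac{t}{-z(1+t\underline m_N(z))}dF^{\bf R}(t)$ through the deterministic-equivalent approximation ${\bf Q}(z)\approx(-z({\bf I}_N + \underline m_N(z){\bf R}_N))^{-1}$ --- yields exactly $\underline m_N(z) \approx -(z - c\int\frac{t}{1+t\underline m_N(z)}dF^{\bf R}(t))^{-1}$ up to a vanishing error.

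Uniqueness of the solution in $\mathbb{C}^+$ I would obtain by the classical argument: since both candidate solutions $\underline m_1,\underline m_2$ are $\mathbb{C}^+$-valued, subtracting the two equations factors out $(\underline m_1 - \underline m_2)$ times $c\int\frac{t^2\,dF^{\bf R}(t)}{(1+t\underline m_1)(1+t\underline m_2)}/(D_1 D_2)$ with $D_i = -1/\underline m_i$, and a Cauchy--Schwarz estimate shows this multiplier has modulus strictly below one on $\mathbb{C}^+$, forcing $\underline m_1 = \underline m_2$. To upgrade from convergence of the mean to almost sure convergence I would control the fluctuations of $\underline m_N(z)$ via a martingale-difference decomposition along the filtration generated by the successive columns of ${\bf X}_N$, bounding each increment with the resolvent identity and Burkholder's inequality to get $\mathbb{E}|\underline m_N(z) - \mathbb{E}\,\underline m_N(z)|^4 = \OO(N^{-2})$, and conclude by Borel--Cantelli; pointwise almost sure convergence on a countable dense subset of $\mathbb{C}^+$, together with the equicontinuity (normal-family) property of Stieltjes transforms, then propagates to all of $\mathbb{C}^+$.

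The hard part is closing the self-consistent equation rigorously: one must show that the random denominators $1 + \frac1M{\bf x}_j^H{\bf R}_N^{1/2}{\bf Q}^{(j)}{\bf R}_N^{1/2}{\bf x}_j$ concentrate on a common deterministic limit while simultaneously ruling out their degeneration, i.e. ensuring that $1 + t\underline m_N(z)$ stays bounded away from zero on the relevant region so that inversions remain controlled. The matrix-valued nature of ${\bf R}_N$, as opposed to the scalar Mar\v{c}enko--Pastur case, is what makes this bookkeeping delicate, and tracking the independence between the isolated column and the reduced resolvent ${\bf Q}^{(j)}$ is the technically most demanding point.
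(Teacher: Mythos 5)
The paper offers no proof of this theorem---it is quoted directly from Silverstein \cite{SIL95}---and your sketch is a faithful reconstruction of exactly that classical argument: the companion-matrix trace identity $m_N(z)=\frac{M}{N}\underline{m}_N(z)-(1-\frac{M}{N})\frac{1}{z}$, the Schur-complement/Sherman--Morrison expression $\underline{\bf Q}_{jj}=\left(-z-z\frac{1}{M}{\bf x}_j^H{\bf R}_N^{1/2}{\bf Q}^{(j)}{\bf R}_N^{1/2}{\bf x}_j\right)^{-1}$ for the diagonal resolvent entries, the Bai--Silverstein trace lemma combined with rank-one stability to close the self-consistent equation, the Cauchy--Schwarz contraction argument for uniqueness in $\mathbb{C}^+$, and the martingale--Burkholder--Borel--Cantelli step (with the $\OO(N^{-2})$ fourth-moment bound) followed by a normal-family argument to pass from pointwise to all of $\mathbb{C}^+$. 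Your outline is correct as a sketch and matches the cited source's approach; the only detail worth flagging is that under merely finite second or fourth moments the higher-moment estimates invoked in the trace lemma and the Burkholder step require the standard preliminary truncation of the entries at level $\varepsilon\sqrt{N}$, exactly as carried out in \cite{SIL95}.
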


\begin{remark}
Note that $\underline{m}(z)$ is also a Stieltjes transform whose
associated distribution function will be denoted $\underline{F}$,
which turns out to be the limiting spectral distribution of
$F^{\hat{\underline{\bf R}}_N}$ where $\hat{\underline{\bf R}}_N$ is
defined as:
$$
\hat{\underline{\bf R}}_N \triangleq \frac{1}{M}{\bf X}_N^H {\bf R}_N{\bf X}_N\ .
$$
\end{remark}

\begin{remark}
Denote by $m_{ \hat{\bf R}_N}(z)$ and $m_{ \hat{\underline{\bf
      R}}_N}(z)$ the Stieltjes transforms of $F^{ \hat{\bf R}_N}$ and
$F^{\hat{\underline{\bf R}}_N}$. Notice in particular that
\begin{equation}\label{eq:def-Mhat}
m_{ \hat{\bf R}_N}(z) = \frac{M}N m_{ \hat{\underline{\bf
      R}}_N}(z) - \left(1 - \frac{M}N\right)\frac1z\ .
\end{equation}
\end{remark}

\begin{remark}
Denote by $m_N(z)$ and $\underline{m}_N(z)$
the finite-dimensional counterparts of $m(z)$ and $\underline{m}(z)$,
respectively, defined by the relations:
\begin{equation}\label{eq:equivalent}
  \begin{cases}
  \underline{m}_N(z) = - \left( z - \frac{N}M\int \frac{t}{1+t\underline{m}_N(z)}dF^{{\bf R}_N}(t) \right)^{-1}\ ,& \text{}\\
  m_N(z) = \frac{M}N \underline{m}_N(z) - \left(1 - \frac{M}N\right)\frac1z\ .& \text{}\\
\end{cases}
\end{equation}
It can be shown that $m_N$ and $\underline{m}_N$ are Stieltjes transforms of given probability measures
$F_N$ and $\underline{F}_N$, respectively (cf. \cite[Theorem 3.2]{Coubook}).
\end{remark}

In \cite{MES08b}, Mestre proposes a novel approach to estimate the
eigenvalues $(\rho_k; \ 1\le k\le L)$ of the population covariance
matrix based on the observations $\hat{\bf R}_N$ under the additional Assumption \ref{ass:separability}. His approach relies on large random matrix theory and
the separability condition presented above plays a major role in the
mere definition of the estimators. As it will be a useful background 
in the sequel, we provide hereafter a brief description of Mestre's results:

\begin{theorem}\cite{MES08b} \label{th:mestre}
  \label{prop:mestre} Denote by
  $\hat{\lambda}_1 \leq \cdots \leq \hat{\lambda}_N$ the ordered eigenvalues of $\hat{\bf R}_N$.
   Under Assumptions \ref{ass:RN}, \ref{ass:NM}, \ref{ass:entries}, \ref{ass:separability} and assuming moreover that the multiplicities $N_1,\cdots,N_L$ are known, the following convergence holds true:
  \begin{equation} \label{eq:estimator}
    \tilde{\rho}_k - \rho_k \xrightarrow[M,N\to \infty]{a.s.} 0\ ,
  \end{equation}
  where
  \begin{equation}
          \label{eq:Mestre_tk}
	  \tilde{\rho}_k = \frac{M}{N_k}\sum_{m\in \mathcal N_k}\left(\hat\lambda_m - \hat{\mu}_m\right)\ ,
  \end{equation}
  with $\mathcal N_k =\{ \sum_{j=1}^{k-1}N_j +1,\ldots,\sum_{j=1}^kN_j
  \}$ and $\hat\mu_1 \leq \cdots\leq \hat\mu_N$ the (real and) ordered
  solutions of:
\begin{equation}\label{eq:def-mu}
\frac 1N \sum_{m=1}^N \frac{\hat \lambda_m}{\hat\lambda_m - \mu} = \frac MN\ 
\end{equation}
repeated with their multiplicites. When $N > M$, we use the convention $\hat{\mu}_1=\cdots=\hat{\mu}_{N-M+1}=0,$ whereas $\hat{\mu}_{N-M+2},\cdots, \hat{\mu}_N$ contain the positive solutions to the above equation. 


\end{theorem}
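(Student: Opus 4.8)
The plan is to recast Mestre's statistic as a contour integral and pass to the limit via the convergence of Stieltjes transforms.

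First, I would reinterpret the defining equation \eqref{eq:def-mu}. Writing $\tfrac{\hat\lambda_m}{\hat\lambda_m-\mu}=1+\tfrac{\mu}{\hat\lambda_m-\mu}$ in \eqref{eq:def-mu} yields $\mu\, m_{\hat{\bf R}_N}(\mu) = -\left(1-\tfrac MN\right)$, which, upon inserting relation \eqref{eq:def-Mhat}, is equivalent to $m_{\hat{\underline{\bf R}}_N}(\mu)=0$. Thus the $\hat\mu_m$ are exactly the real zeros of the Stieltjes transform $m_{\hat{\underline{\bf R}}_N}$, whose poles are the nonzero eigenvalues $\hat\lambda_m$; in particular they interlace the $\hat\lambda_m$. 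Applying the argument principle to the meromorphic function $m_{\hat{\underline{\bf R}}_N}$, for any positively oriented contour $\mathcal C_k$ enclosing the $k$-th cluster of the spectrum one gets
$$
\frac{1}{2\pi\mathbf i}\oint_{\mathcal C_k} z\,\frac{m_{\hat{\underline{\bf R}}_N}'(z)}{m_{\hat{\underline{\bf R}}_N}(z)}\,dz = \sum_{m\in\mathcal N_k}\hat\mu_m - \sum_{m\in\mathcal N_k}\hat\lambda_m\ ,
$$
so that $\displaystyle \tilde\rho_k = -\frac{M}{N_k}\,\frac{1}{2\pi\mathbf i}\oint_{\mathcal C_k} z\,\frac{m_{\hat{\underline{\bf R}}_N}'(z)}{m_{\hat{\underline{\bf R}}_N}(z)}\,dz$.

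Next I would pass to the limit. Under Assumption \ref{ass:separability} the support of $F$ splits into $L$ disjoint clusters, so $\mathcal C_k$ can be fixed to encircle the $k$-th cluster while remaining in a compact region bounded away from the support. Invoking the exact separation phenomenon for sample covariance matrices \cite{SIL95}, almost surely for all large $N$ no eigenvalue of $\hat{\bf R}_N$ falls outside the clusters, each cluster carries exactly the eigenvalues indexed by $\mathcal N_k$, and by interlacing the same holds for the $\hat\mu_m$; hence $\mathcal C_k$ captures precisely the indices of $\mathcal N_k$. On $\mathcal C_k$ one has $m_{\hat{\underline{\bf R}}_N}\to\underline m$ and $m_{\hat{\underline{\bf R}}_N}'\to\underline m'$ uniformly and almost surely with $\underline m\neq 0$, and $\tfrac{M}{N_k}\to\tfrac{1}{c\,c_k}$, so
$$
\tilde\rho_k \xrightarrow[M,N\to\infty]{a.s.} -\frac{1}{c\,c_k}\,\frac{1}{2\pi\mathbf i}\oint_{\mathcal C_k} z\,\frac{\underline m'(z)}{\underline m(z)}\,dz\ .
$$

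Finally I would evaluate this limit by the change of variable $w=\underline m(z)$. Inverting the Marcenko--Pastur equation of Theorem \ref{th:stieltjes} gives $z(w)=-\tfrac1w + c\sum_{l=1}^L c_l\,\tfrac{\rho_l}{1+\rho_l w}$ and $\tfrac{\underline m'(z)}{\underline m(z)}\,dz=\tfrac{dw}{w}$, turning the integral into $-\tfrac{1}{c\,c_k}\tfrac{1}{2\pi\mathbf i}\oint_{\tilde{\mathcal C}_k}\tfrac{z(w)}{w}\,dw$ over the image contour $\tilde{\mathcal C}_k=\underline m(\mathcal C_k)$. Under separability this image encircles exactly the simple pole $w=-1/\rho_k$ of $z(w)/w$ (and neither $w=0$ nor the other $-1/\rho_l$), whose residue is $-c\,c_k\rho_k$; the residue theorem then delivers the value $\rho_k$, which proves \eqref{eq:estimator}. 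The main obstacle is precisely this limit passage: one must rigorously justify, through exact separation, that the fixed deterministic contour $\mathcal C_k$ isolates the correct random eigenvalues and $\hat\mu$-points for all large $N$, and track the correspondence between the $z$-clusters and the poles $-1/\rho_k$ in the $w$-plane — which is exactly where Assumption \ref{ass:separability} becomes indispensable.
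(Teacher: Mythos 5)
Your proposal is correct and follows essentially the same route as the paper's own proof sketch: the contour-integral representation of $\tilde\rho_k$ via the zeros $\hat\mu_m$ of $m_{\hat{\underline{\bf R}}_N}$ (the paper's equation \eqref{eq:hattk}, which you recover through the argument principle), the passage to the limit on a fixed contour under the separability assumption, and the change of variable $w=\underline m(z)$ inverting the Mar\v{c}enko--Pastur equation with a residue evaluation at $w=-1/\rho_k$ — merely presented in the reverse order (empirical formula $\to$ integral $\to$ limit, rather than population Cauchy integral $\to$ plug-in estimator $\to$ residue calculus). You also correctly identify, as the paper does by deferring to \cite{MES08b}, that the genuinely technical point is the exact-separation/cluster-correspondence argument guaranteeing that the fixed contour $\mathcal C_k$ captures precisely the $\hat\lambda_m$ and $\hat\mu_m$ with $m\in\mathcal N_k$.
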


\begin{remark} Notice that \eqref{eq:def-mu} associated to
  \eqref{eq:def-Mhat} readily implies that for non null $\hat{\mu}_i$, $m_{ \hat{\underline{\bf
        R}}_N}(\hat \mu_i) = 0$. Otherwise stated, the $\hat\mu_i$'s
  are the zeros of $m_{ \hat{\underline{\bf R}}_N}$. This fact will be
  of importance in the sequel.
\end{remark}


\begin{proofsketch} We can now describe the main steps of Theorem \ref{th:mestre}. By Cauchy's formula, write:
$$
\rho_k = \frac{N}{N_k} \frac{1}{2 i \pi} \oint_{\Gamma_k} \left(
  \frac{1}{N}\sum_{r=1}^{L} N_r \frac{w}{\rho_r-w} dw \right)\ ,
$$
where $\Gamma_k$ is a positively oriented (clockwise) contour taking values on
$\mathbb{C} \setminus\{\rho_1,\cdots, \rho_L\}$ and only enclosing
$\rho_k$. With the change of variable $w=-\frac{1}{\underline{m}_{M}(z)}$
and the condition that the limiting support $\mathcal S$ of the
eigenvalue distribution of ${\bf R}_N$ is formed of $L$ distinct clusters
$({\mathcal S}_k, 1\le k\le L)$ (cf. Figure \ref{fig:clusters1}), we can write:
\begin{equation}
\label{eq:eigenvalue}
  \rho_k = \frac{M}{2 i \pi N_k}\oint_{\mathcal C_k} z\frac{\underline{m}_N'(z)}
{\underline{m}_N(z)}dz\ ,\quad 1\le k\le L ,\
\end{equation}
where ${\mathcal C}_k$ denotes positively
oriented contours which enclose the corresponding clusters ${\mathcal S_k}$. Defining
\begin{equation}
	\label{eq:hattk}
        \tilde{\rho}_k \triangleq \frac{M}{2\pi i N_k}\oint_{\mathcal C_k} z\frac{m_{\hat{\underline{\bf R}}_N}'(z)}{m_{\hat{\underline{\bf R}}_N}(z)}dz\ ,\quad 1\le k\le L\ ,
\end{equation}
dominated convergence arguments ensure that $\tilde{\rho}_k-\rho_k\to
0$, almost surely. The integral form of $\tilde{\rho}_k$ can then be
explicitly computed thanks to residue calculus, and this finally
yields \eqref{eq:Mestre_tk}.
\end{proofsketch}

Recently a central limit theorem has been derived \cite{YAO11journal} for this estimator under the extra 
assumption that the entries of ${\bf X}_N$ are Gaussian .
\begin{theorem} \cite{YAO11journal}\label{th:CLT} With the same notations as before, under Assumptions \ref{ass:RN}, \ref{ass:NM}, 3b, \ref{ass:separability} and with known multiplicities $N_1, \cdots,N_L$, then: 
$$
\left( M(\tilde{\rho}_k- \rho_k), \ 1\le k\le L \right)
 \xrightarrow[M,N\to \infty]{\mathcal D} {\bf x} \sim {\mathcal N}_L(0,\boldsymbol{\Theta})\ ,
$$
where ${\mathcal N}_L$ refers to a real $L$-dimensional
Gaussian distribution, and $\boldsymbol{\Theta}$ is a $L\times L$
matrix whose entries $\Theta_{k\ell}$ are given by,

$$\Theta_{k\ell} \quad = \quad -\frac1{4\pi^2 c^2c_{k}c_{\ell}}\oint_{\mathcal C_{k}}\oint_{\mathcal C_{\ell}}
\left[ \frac{\underline{m}'(z_1)\underline{m}'(z_2)}{(\underline{m}(z_1)-\underline{m}(z_2))^2}
-\frac1{(z_1-z_2)^2} \right] \frac1{\underline{m}(z_1)\underline{m}(z_2)}dz_1 dz_2\ ,$$
where ${\mathcal C}_k$ (resp. ${\mathcal C}_\ell$) is a closed counterclockwise oriented contour which only contains the k-th cluster (resp. $\ell$-th) .
\end{theorem}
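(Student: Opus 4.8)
\begin{proofsketch}
The plan is to reduce the statement to a \emph{functional} central limit theorem for the random Stieltjes transform $m_{\hat{\underline{\bf R}}_N}$ along the contours $\mathcal C_k$, and then to read off the covariance by integrating the limiting Gaussian kernel. Starting from the contour representations \eqref{eq:hattk} and \eqref{eq:eigenvalue}, I would first decompose
\[
M(\tilde\rho_k-\rho_k)=\frac{M^2}{2\pi i N_k}\oint_{\mathcal C_k} z\left[\frac{m_{\hat{\underline{\bf R}}_N}'(z)}{m_{\hat{\underline{\bf R}}_N}(z)}-\frac{\underline{m}_N'(z)}{\underline{m}_N(z)}\right]dz+M\,b_{k,N},
\]
where $b_{k,N}$ is the purely deterministic discrepancy between $\frac{M}{2\pi i N_k}\oint_{\mathcal C_k} z\,\underline{m}_N'/\underline{m}_N\,dz$ and the limit $\rho_k$. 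A first, deterministic, step is a bias analysis showing $M\,b_{k,N}\to 0$, i.e. that the finite-dimensional equivalent of $\rho_k$ is accurate to $o(1/M)$; this rests on the smoothness of $\underline{m}_N$ and on the fact that, under Assumption \ref{ass:separability}, $\mathcal C_k$ stays at a positive distance from the support.

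To handle the random term I would \emph{linearize} the logarithmic derivative. Writing $\Delta_N(z)=m_{\hat{\underline{\bf R}}_N}(z)-\underline{m}_N(z)$, a first-order expansion yields $\frac{m_{\hat{\underline{\bf R}}_N}'}{m_{\hat{\underline{\bf R}}_N}}-\frac{\underline{m}_N'}{\underline{m}_N}=\left(\frac{\Delta_N}{\underline{m}_N}\right)'+O(\Delta_N^2)$. Since $\Delta_N/\underline{m}_N$ is single-valued and meromorphic with no pole on $\mathcal C_k$, integration by parts gives $\oint_{\mathcal C_k} z\left(\Delta_N/\underline{m}_N\right)'dz=-\oint_{\mathcal C_k}\Delta_N/\underline{m}_N\,dz$ (the boundary term vanishing on the closed contour). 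Because $\Delta_N(z)=O(1/N)$ uniformly on $\mathcal C_k$ (a normalized trace minus its deterministic equivalent), the quadratic remainder is $O(1/N^2)$ and is annihilated by the $M^2/N_k=O(M)$ prefactor. The dominant contribution is thus $-\frac{M}{2\pi i N_k}\oint_{\mathcal C_k}\frac{M\Delta_N(z)}{\underline{m}_N(z)}\,dz$, with $M/N_k\to(cc_k)^{-1}$.

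The core is now the CLT for the process $\{M\Delta_N(z):z\in\mathcal C_k\}$. Here I would invoke the Bai--Silverstein central limit theorem for linear spectral statistics of sample covariance matrices (the tool underlying \cite{YAO11journal}), which asserts that, as an analytic process on a neighbourhood of the contours, $\{M\Delta_N(z)\}$ converges weakly to a centred Gaussian process $G$ with covariance kernel $\frac{\underline{m}'(z_1)\underline{m}'(z_2)}{(\underline{m}(z_1)-\underline{m}(z_2))^2}-\frac{1}{(z_1-z_2)^2}$. Under Assumption 3b the entries are complex Gaussian, so the fourth cumulant vanishes and the fourth-moment part of the Bai--Silverstein mean disappears; the remaining deterministic mean, together with $b_{k,N}$, integrates to zero around the closed contour, which accounts for the vanishing mean in the statement. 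Since $M(\tilde\rho_k-\rho_k)$ is, to leading order, the continuous linear functional $\xi_k=-\frac{1}{2\pi i\,cc_k}\oint_{\mathcal C_k}\frac{G(z)}{\underline{m}(z)}\,dz$ of this Gaussian process, the family $(\xi_k)_{1\le k\le L}$ is jointly Gaussian, and $\Theta_{k\ell}=\mathrm{Cov}(\xi_k,\xi_\ell)$ is obtained by integrating the kernel against $1/(\underline{m}(z_1)\underline{m}(z_2))$ over $\mathcal C_k\times\mathcal C_\ell$; using $(2\pi i)^2=-4\pi^2$ this reproduces exactly the announced expression for $\Theta_{k\ell}$.

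The main obstacle is the rigorous interchange of the weak limit and the contour integral, i.e. passing from convergence of the process $M\Delta_N$ to convergence of its integral functional. This requires \emph{tightness} of the process on $\mathcal C_k$ and a uniform (in $z\in\mathcal C_k$) stochastic bound on $\Delta_N$ and on the linearization remainder, so that the quadratic error truly vanishes after multiplication by $M^2/N_k$. A second delicate point is the control of the contour itself: one must invoke an exact separation result (no eigenvalues of $\hat{\bf R}_N$ outside the limiting support, in the spirit of \cite{SIL95}) to guarantee that, with probability one for large $N$, $\mathcal C_k$ encircles exactly the $k$-th cluster, so that $\tilde\rho_k$ in \eqref{eq:hattk} coincides with the residue-based expression \eqref{eq:Mestre_tk} and the integral isolates the correct eigenvalue.
\end{proofsketch}
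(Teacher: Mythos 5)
Your sketch is correct, and its overall strategy coincides with the one the paper attributes to \cite{YAO11journal}: a functional CLT on the contours for $M\bigl(m_{\hat{\underline{\bf R}}_N}-\underline{m}_N\bigr)$ (from \cite{BAI04}, in the form of \cite[Lemma 9.11]{SIL06}), followed by the continuous mapping theorem and an integration of the kernel $\kappa(z_1,z_2)$ against $1/(\underline{m}(z_1)\underline{m}(z_2))$, which indeed reproduces $\Theta_{k\ell}$ with the correct constant $M/N_k\to (cc_k)^{-1}$ and $(2\pi i)^2=-4\pi^2$. The one genuine divergence is your middle step: you \emph{linearize} the logarithmic derivative, writing it as $(\Delta_N/\underline{m}_N)'+O(\Delta_N^2)$, and must then supply a uniform stochastic bound ($\sup_{\mathcal C_k}|M\Delta_N|=O_P(1)$, plus Cauchy estimates for $\Delta_N'$ on a slightly enlarged contour) to kill the remainder after multiplication by $M^2/N_k$. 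The paper's route (made explicit in its own Section~IV proof, which mirrors \cite{YAO11journal}) avoids any Taylor remainder by the exact algebraic identity
$$
\frac{m_{\hat{\underline{\bf R}}_N}'(z)}{m_{\hat{\underline{\bf R}}_N}(z)}-\frac{\underline{m}_N'(z)}{\underline{m}_N(z)}
=\frac{\underline{m}_N(z)X_N'(z)-\underline{m}_N'(z)X_N(z)}{M\,\underline{m}_N(z)u_N(z)}\ ,
\qquad X_N=M\bigl(m_{\hat{\underline{\bf R}}_N}-\underline{m}_N\bigr),\ u_N=m_{\hat{\underline{\bf R}}_N}\ ,
$$
and handles the random denominator via the extended continuous mapping theorem (Proposition~\ref{th:CMT}) on the set $\{\inf_{\mathcal C}|u|>0\}$, using $u_N\to\underline{m}$ a.s.; what your approach buys in exchange is that the early pathwise integration by parts eliminates the derivative process, so you need weak convergence of $X_N$ alone rather than jointly with $X_N'$, and you obtain in one stroke the reduction to $\kappa$ that the paper performs only at the covariance-computation stage (Appendix~\ref{app:variance}). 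Two small remarks: your bias term $b_{k,N}$ is in fact exactly zero, since \eqref{eq:eigenvalue} is an exact finite-$N$ residue identity under Assumption~\ref{ass:separability} (so no $o(1/M)$ bias analysis is needed); and under Assumption 3b the limiting mean of $M\Delta_N$ vanishes identically (zero fourth cumulant and $\mathbb{E}X_{11}^2=0$), so nothing needs to be "integrated away" around the contour. Your final paragraph correctly identifies the two points that must be imported from the literature — tightness to interchange weak limits with contour integration, and exact separation so that $\mathcal C_k$ is a.s. eigenvalue-free and \eqref{eq:hattk} matches \eqref{eq:Mestre_tk} — which is exactly what the cited references supply.
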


The proof of this theorem is based on \cite{BAI04} and the continuous
mapping theorem. Details are available in \cite{YAO11journal}.

The main objective of this article is to provide estimators for the
$\rho_k$'s without relying anymore on the separability condition
(i.e. to remove Assumption \ref{ass:separability}). A Central Limit Theorem will
be established as well for the proposed estimator. As a by-product,
the knowledge of the multiplicities will no longer be needed, and they
will be estimated as well.

\section{Estimation of the eigenvalues $\rho_i$}
\label{sec:estimation}

In this section, we provide a method to estimate consistently the eigenvalues of the population covariance matrix without the  need to the separability condition (cf. Fig. \ref{fig:clusters2}). Our method is based on the asymptotic evaluation of the moments of the  eigenvalues of ${\bf R}_N$, ${\gamma}_i=\sum_{k=1}^L \frac{N_k}{N}\rho_k^i$, $1\le i \le 2L-1$. If $(\widehat{m}_i)_{1\le i \le 2L-1}$ are the empirical moments of the sample eigenvalues, then it is well known that except for $i=1$, ${\gamma}_i$ cannot be approximated by  $\widehat{m}_i$. Consistent estimators for $\gamma_i$ are provided in \cite{rubio09-SP}, where it has been proved that:
$$
\gamma_i-\tilde{\gamma}_i\xrightarrow[N,M\to+\infty]{}0,
$$
where
$$
\tilde{\gamma}_i=\sum_{l=1}^i\mu_S(l,i) \widehat{m}_l,
$$
$\mu_S(l,i)$ being some given coefficients that depend on the system dimensions and on the empirical moments $\widehat{m}_i$ \cite{rubio09-SP}. An alternative is to use the Stieltjes transform:
\begin{lemma}
\label{lem:moments}
Assume that Assumptions \ref{ass:RN}, \ref{ass:NM} and \ref{ass:entries} hold true.
Let $\hat{\gamma}_i$ be the real quantities given by:
$$
\left\{
\begin{array}{lll}
\hat{\gamma}_0&=&1,\\
\hat{\gamma}_1&=&-\frac{M}{2N{\bf i}\pi}\oint_{\mathcal{C}}\frac{z\mrnhat'(z)}{\mrnhat(z)}dz,\\
\hat{\gamma}_{k}&=&\frac{M(-1)^{k}}{2Nk{\bf i}\pi}\oint_{\mathcal{C}}\frac{dz}{\mrnhat^{k}(z)}, \hspace{0.5cm} \textnormal{for} \hspace{0.2cm} 2\leq k\leq 2L-1
\end{array}
\right.
$$
\label{lem:moment}
where $\mathcal{C}$ is a counterclockwise oriented contour which encloses the support $\mathcal{S}$ of the limiting distribution of the eigenvalues of $\rnhat$. Let $\gamma_i$ be the moments of the eigenvalues of ${\bf R}_N$, {\it i.e.} ${\gamma}_i=\sum_{k=1}^L \frac{N_k}{N} \rho_k^i$. Then, for $1\le i\le 2L-1$,
$$
\hat{\gamma}_i-\gamma_i\xrightarrow[N,M\to\infty]{a.s.} 0 \ .
$$
\end{lemma}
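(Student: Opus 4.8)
The plan is to prove the convergence in two stages: first establish a deterministic, finite-dimensional identity in which the random Stieltjes transform $\mrnhat$ is replaced by its deterministic equivalent $\mn$ from \eqref{eq:equivalent}, and then control the error incurred by this replacement uniformly along the contour $\mathcal C$. Concretely, I would introduce the deterministic surrogates $\hat\gamma_i^{\circ}$ obtained by substituting $\mn$ for $\mrnhat$ in each displayed integral, prove the \emph{exact} identities $\hat\gamma_i^{\circ}=\gamma_i$ for $1\le i\le 2L-1$, and finally show $\hat\gamma_i-\hat\gamma_i^{\circ}\xrightarrow{a.s.}0$. Since $\gamma_i=\sum_k (N_k/N)\rho_k^i$ is already the finite-$N$ population moment, no separate moment-convergence step is required.

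For the deterministic identity I would follow the change of variables used in the proof sketch of Theorem~\ref{th:mestre}, namely $w=-1/\mn(z)$. Using the defining equation of $\mn$ in \eqref{eq:equivalent}, this inverts explicitly to $z=\phi(w)$ with $\phi(w)=w+\frac{N}{M}\int \frac{t\,w}{w-t}\,dF^{{\bf R}_N}(t)=w+\frac{1}{M}\sum_{k=1}^{L}N_k\frac{\rho_k w}{w-\rho_k}$, a rational function whose only finite singularities are at $w=\rho_1,\dots,\rho_L$. As $z$ runs once counterclockwise around the support $\mathcal S$, its image $\mathcal C_w$ is a closed contour encircling each $\rho_k$ exactly once, which I would verify by tracking the real branches of $\phi$ and the square-root behaviour at the band edges $\phi'(w)=0$, as in Mestre's argument. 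Each integrand then becomes rational in $w$: for $\hat\gamma_1$ one obtains $\phi(w)/w$ with simple poles at the $\rho_k$, and for $\hat\gamma_k$ a function with double poles coming from $\phi'(w)=1-\frac1M\sum_k N_k\rho_k^2/(w-\rho_k)^2$. Residue calculus at the $\rho_k$ returns the power sums $\frac1N\sum_k N_k\rho_k^i$, the prefactors $\tfrac{M(-1)^k}{2Nk\mathbf i\pi}$ and $-\tfrac{M}{2N\mathbf i\pi}$ being calibrated so that the constants match, while the polynomial part of $\phi$ integrates to zero. Crucially, any zeros $\mu_i$ lying in gaps between merged clusters map under $w=-1/\mn$ to $w=\infty$, outside the bounded image $\mathcal C_w$, and hence contribute nothing; this is precisely why a single enclosing contour suffices and Assumption~\ref{ass:separability} can be dropped.

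The second stage is probabilistic. I would invoke the almost sure convergence underlying Theorem~\ref{th:stieltjes}, strengthened to $\sup_{z\in\mathcal C}|\mrnhat(z)-\mn(z)|\xrightarrow{a.s.}0$ on the fixed contour $\mathcal C$; since $\mathcal C$ stays at positive distance from the real axis, pointwise a.s. convergence upgrades to uniform convergence by a normal-families/Vitali argument. Because the zeros $\hat\mu_i$ of $\mrnhat$ and the sample eigenvalues eventually lie strictly inside $\mathcal C$, the integrands $1/\mrnhat^{k}$ and $\mrnhat'/\mrnhat$ are analytic on a neighbourhood of $\mathcal C$; together with $\inf_{z\in\mathcal C}|\mn(z)|>0$ this gives uniform convergence of the full integrands, and passing to the limit under the fixed-length contour integral yields $\hat\gamma_i-\hat\gamma_i^{\circ}\xrightarrow{a.s.}0$.

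The crux---the step I expect to be hardest---is guaranteeing that, almost surely and for all large $N$, \emph{no} sample eigenvalue of $\rnhat$ (equivalently, no zero of the random denominator) approaches $\mathcal C$, so that $1/\mrnhat^{k}$ stays uniformly bounded and $\mrnhat$ stays uniformly away from $0$ on $\mathcal C$. This is the ``no eigenvalues outside the support'' / exact-separation phenomenon of Bai and Silverstein, which I would cite to place $\mathcal C$ in the a.s. eventual resolvent set and to exclude spurious zeros of $\mrnhat$ near $\mathcal C$. A secondary technicality is the case $N>M$, where $\rnhat$ is singular and the limiting spectrum carries an atom at $0$: I would keep $\mathcal C$ enclosing this atom consistently with the convention $\hat\mu_1=\dots=\hat\mu_{N-M+1}=0$ of Theorem~\ref{th:mestre}, and check that the residue bookkeeping at the origin is unaffected, the factor $w=-1/\mn$ being regular there.
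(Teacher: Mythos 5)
Your proposal is correct and essentially reproduces the paper's own argument: the paper likewise reduces the lemma to an exact finite-$N$ identity via Mestre's change of variable $w=-1/\underline{m}_N(z)$ (it travels from the Cauchy integral over a $w$-contour around the $\rho_k$'s to the $z$-plane and integrates by parts, which is your residue computation at the $\rho_k$'s run in reverse), and it likewise handles the stochastic substitution of $\mrnhat$ for $\underline{m}_N$ by dominated convergence on the fixed contour using the almost-sure bounds $\inf_{z\in\mathcal C}|\underline{m}_N(z)|>0$ and $\inf_{z\in\mathcal C}|\mrnhat(z)|>0$ from \cite[Section 9.12]{SIL06} --- the same Bai--Silverstein ``no eigenvalues outside the support'' input that you make explicit, and your observation that zeros of $\underline{m}_N$ in spectral gaps map to $w=\infty$ and the atom at $0$ contributes no residue is exactly why the single enclosing contour works without Assumption~\ref{ass:separability}. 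One caveat: if you actually carry out the residue calibration you assert, you will find the printed constants do \emph{not} match --- the paper's own identity \eqref{eq:integral} reads $\gamma_k=\frac{M(-1)^{k}}{2\mathbf{i}\pi N(k-1)}\oint_{\mathcal C}\underline{m}_N^{-(k-1)}(z)\,dz$, so for $2\le k\le 2L-1$ the estimator should carry the power $k-1$ and factor $1/(k-1)$ (as displayed, $\hat{\gamma}_k$ converges to $-\gamma_{k+1}$) --- an off-by-one typo in the lemma's statement (repeated in the appendix) rather than a gap in your approach.
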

The proof of this lemma is postponed to Appendix \ref{app:proof_lemma}. While the estimates proposed by \cite{rubio09-SP}
are better in practice, estimates $(\hat\gamma_i)$ will be of interest in order to establish the central limit theorem, and  
to obtain a closed-form expression of the asymptotic variance.

An interesting remark is that the map that links the eigenvalues and their multiplicities to their first $2L-1$ moments is invertible. Retrieving the eigenvalues from the estimates of the  $2L-1$ moments is thus possible. This is the basic idea on which our method is founded.




The main result is stated as below:
\begin{theorem}
\label{th:estimation}
Recall the notations of Lemma \ref{lem:moments} and consider the system of equations:
\begin{equation}
\label{eq:estimations}
\begin{cases}
\sum_{i=1}^L x_i = 1 ,& \text{}\\

\sum_{i=1}^L x_i y_i= \hat{\gamma}_1, &\text{}\\

\sum_{i=1}^L x_i y_i^k=\hat{\gamma}_k & \text{for $2 \leq k \leq 2L-1$,}
\end{cases}
\end{equation}
where $(x_i)_{1\leq i\leq L}$ and $(y_i)_{1 \leq i \leq L}$ are $2L$ unknown parameters.
Then under Assumptions \ref{ass:RN}, \ref{ass:NM}, \ref{ass:entries}, the system of equations (\ref{eq:estimations}) has one and only one real solution $(\hat{c}_1,\cdots,\hat{c}_L, \hat{\rho}_1,\cdots,\hat{\rho}_L)$ with $\hat{\rho}_1 \leq \cdots \leq \hat{\rho}_L.$ Moreover, $(\hat{c}_1,\cdots,\hat{c}_L,\hat{\rho}_1,\cdots,\hat{\rho}_L)$ is a consistent estimator of $(c_1,\cdots,c_L,\rho_1,\cdots,\rho_L)$, {\it{i.e.}},
$$\hat{c}_\ell- c_\ell \xrightarrow[N,M\rightarrow \infty]{a.s.} 0\quad \textrm{and}\quad  \hat{\rho}_\ell-\rho_\ell \xrightarrow[N,M\rightarrow \infty]{a.s.} 0,$$
with $c_\ell= \lim\frac{N_\ell}{N}$ for $1 \leq \ell \leq L.$
\end{theorem}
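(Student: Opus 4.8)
The plan is to recognize the system (\ref{eq:estimations}) as the classical problem of reconstructing a discrete signed measure $\nu = \sum_{i=1}^L x_i \delta_{y_i}$ carried by at most $L$ atoms from its first $2L$ moments $\hat{\gamma}_0 = 1, \hat{\gamma}_1, \ldots, \hat{\gamma}_{2L-1}$. The limiting eigenvalue distribution $F^{\bf R}$, whose moments are $\gamma_i = \sum_k c_k \rho_k^i$, is precisely such a measure, with exactly $L$ distinct atoms $\rho_k$ carrying strictly positive masses $c_k$. The strategy is thus twofold: first, to solve the reconstruction exactly at the limiting moments $(\gamma_i)$ and to isolate a nondegeneracy condition; second, to transfer existence, uniqueness and convergence to the empirical moments $(\hat{\gamma}_i)$ by continuity, invoking Lemma \ref{lem:moments}.

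For the first step I would introduce the $L \times L$ Hankel matrix $\mathbf{H}_0 = [\gamma_{i+j}]_{0 \le i,j \le L-1}$. Because $F^{\bf R}$ carries exactly $L$ distinct atoms with positive weights, $\mathbf{H}_0$ is positive definite: for any nonzero vector of coefficients of a polynomial $q$ of degree at most $L-1$, the associated quadratic form equals $\int q^2 \, dF^{\bf R} = \sum_k c_k \, q(\rho_k)^2 > 0$, since such a $q$ cannot vanish at all $L$ points $\rho_k$. This nonsingularity is the decisive nondegeneracy property. Prony's method then recovers the atoms: the coefficients $(a_0, \ldots, a_{L-1})$ of the monic polynomial $p(y) = y^L + \sum_{j=0}^{L-1} a_j y^j$ vanishing at every $\rho_k$ are the unique solution of the Hankel linear system $\sum_{j=0}^{L-1} a_j \gamma_{k+j} = -\gamma_{k+L}$ for $0 \le k \le L-1$; the $y_i$ are the roots of $p$, and the weights $x_i$ are the unique solution of the nonsingular Vandermonde system formed by the first $L$ moment equations. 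Positive definiteness of $\mathbf{H}_0$ guarantees that these roots are real and simple and that the recovered weights are positive, so the solution coincides with $(c_k, \rho_k)$, while uniqueness among all real ordered solutions follows from the uniqueness of each stage of this construction.

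For the second step I would note that every operation above -- solving the Hankel system for $(a_j)$, extracting the roots of $p$, and solving the Vandermonde system for $(x_i)$ -- is a continuous function of the input moments on the open set where $\mathbf{H}_0$ is nonsingular and $p$ has simple roots. By Lemma \ref{lem:moments} one has $\hat{\gamma}_i \to \gamma_i$ almost surely, so with probability one the empirical Hankel matrix $\widehat{\mathbf{H}}_0 = [\hat{\gamma}_{i+j}]$ is positive definite and the empirical Prony polynomial $\hat{p}$ has $L$ simple real roots for all large $N,M$. This delivers, for large $N,M$, the existence of a unique real ordered solution $(\hat{c}_\ell, \hat{\rho}_\ell)$, and the continuity of the construction yields $\hat{\rho}_\ell \to \rho_\ell$ and $\hat{c}_\ell \to c_\ell$ almost surely.

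The main obstacle is the first step: establishing uniqueness among \emph{all} real solutions (not only those forming a genuine positive measure) and confirming that the reconstructed atoms are truly real rather than complex. Both hinge on the positive definiteness of the limiting Hankel matrix $\mathbf{H}_0$, which is exactly where the hypothesis of $L$ distinct positive eigenvalues $\rho_k$ from Assumption \ref{ass:RN} enters. Once this nondegeneracy and the almost sure convergence of Lemma \ref{lem:moments} are in hand, the stability transfer of the second step is comparatively routine.
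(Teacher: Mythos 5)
Your proposal is correct in substance; it coincides with the paper's argument on uniqueness but takes a genuinely different route to existence. For uniqueness, the paper does exactly what you sketch: any real ordered solution yields a monic annihilating polynomial $Q$ whose coefficient vector $\mathbf{s}$ solves the Hankel system $\boldsymbol{\Gamma}\mathbf{s}=-\mathbf{b}$ with $\boldsymbol{\Gamma}=[\hat{\gamma}_{i+j}]_{0\le i,j\le L-1}$ (this is the step you compress into ``uniqueness of each stage''; the paper makes it explicit by summing $\sum_i s_i\hat{c}_k\hat{\rho}_k^{\ell+i}=0$ over $k$), and nonsingularity of $\boldsymbol{\Gamma}$ is certified via the factorization $\boldsymbol{\Gamma}=\mathbf{A}\mathbf{D}\mathbf{A}^T$, giving $\det\boldsymbol{\Gamma}=\prod_k\hat{c}_k\prod_{i<j}(\hat{\rho}_i-\hat{\rho}_j)^2>0$ --- precisely your positive-definiteness observation $\int q^2\,dF^{\mathbf{R}}=\sum_k c_k q(\rho_k)^2>0$ transported to the empirical level. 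For existence, however, the paper does not run Prony constructively: it applies the inverse function theorem to $f(x,y)=\left(\sum_\ell x_\ell,\sum_\ell x_\ell y_\ell,\cdots,\sum_\ell x_\ell y_\ell^{2L-1}\right)$, proving the Jacobian $\mathbf{M}$ at $(c_1,\cdots,c_L,\rho_1,\cdots,\rho_L)$ invertible because a kernel vector would produce a polynomial of degree at most $2L-1$ with a double root at each of the $L$ distinct $\rho_\ell$; the solution is then $f^{-1}(\hat{\gamma}_0,\cdots,\hat{\gamma}_{2L-1})$, which \emph{automatically} satisfies all $2L$ equations, and consistency follows from Lemma \ref{lem:moments} and continuity of $f^{-1}$. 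Your constructive route buys an explicit algorithm and a single unified mechanism (Hankel nonsingularity) for both halves of the theorem, but it leans on one classical fact you leave implicit and which must be recorded for the empirical (non-realizable) moments: that the output of the pipeline (Hankel solve, root extraction, Vandermonde solve for the first $L$ equations) also matches the top moments $\hat{\gamma}_L,\cdots,\hat{\gamma}_{2L-1}$. This holds because the constructed sequence $\mu_k=\sum_i\hat{x}_i\hat{y}_i^{\,k}$ and the data $\hat{\gamma}_k$ satisfy the same length-$L$ linear recurrence with coefficients $\hat{a}_j$ (for the data this is exactly the Hankel system), so agreement on $k=0,\cdots,L-1$ propagates to $k=L,\cdots,2L-1$; the paper's inverse-function-theorem route sidesteps this verification entirely, and as a side benefit the invertibility of $\mathbf{M}$ is reused verbatim in the central limit theorem of Section \ref{sec:fluctuations}.
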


\begin{remark}
The condition of separability is not required in the previous theorem. Moreover, the multiplicities are assumed to be unknown and thus have to be estimated. Fig \ref{fig:clusters2} represents a case where the three clusters are merged into one cluster. In such a situation, the estimator in \cite{MES08b} is biased whereas the proposed one is asymptotically consistent.

\end{remark}

\begin{remark}
We use the estimator proposed in Lemma \ref{lem:moments}. However, the proof below does not depend on the estimator of the moments we choose. In fact, for any consistent estimator of the moments $\gamma_i$, the above theorem always holds true.

\end{remark}
\begin{proof}
The proof can be split into two main steps. By using the inverse function theorem, we can prove the almost sure existence of a real solution. Then, the uniqueness is ensured by a matrix inversion argument.


\begin{enumerate}
\item {\it Existence of a real solution of the system.}
\label{sec:existence}
\end{enumerate}

The first task is to show that the system of equations \eqref{eq:estimations} admits, for $N$ sufficiently large, one real solution $(\hat{c}_1,\cdots,\hat{c}_L,\hat{\rho}_1\cdots,\hat{\rho}_L)$ satisfying $\hat{\rho}_1<\hat{\rho}_2<\cdots<\hat{\rho}_L$.  We shall also establish the consistency of the obtained solution. The proof of the existence of a real solution follows in the same way as in \cite{CHE10}. It is merely based on the use of the inverse function theorem which ensures the existence as soon as the Jacobian matrix of the considered transformation is invertible.
We recall below the inverse function theorem \cite{krantz}:
\begin{theorem} \cite{krantz}
Let $f$ : ${\mathbb R}^{n} \to {\mathbb R}^n$ be a continuously differentiable function. Let ${\bf a}$ and ${\bf b}$ be vectors of $\mathbb{R}^n$ such that $f({\bf a})={\bf b}$. If the Jacobian of $f$ at ${\bf a}$ is invertible, then there exists a neighborhood $U$ containing ${\bf a}$ such that $f:U\rightarrow f(U)$ is a diffeomorphism, i.e, for every ${\bf y}\in f(U)$ there exists a unique ${\bf x}$ such that $f({\bf x})={\bf y}$. In particular, $f$ is invertible in $U$.
\end{theorem}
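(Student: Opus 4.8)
The plan is to prove this classical statement by the contraction mapping principle (Banach fixed-point theorem), which is the standard route and the one that makes both the local bijectivity and the smoothness of the inverse transparent. First I would reduce to a normalized situation: since $Df(\mathbf{a})$ is invertible, I replace $f$ by the map $\mathbf{x}\mapsto \bigl(Df(\mathbf{a})\bigr)^{-1}\bigl(f(\mathbf{x}+\mathbf{a})-\mathbf{b}\bigr)$. Pre- and post-composition with an invertible linear map and a translation are themselves $C^1$ diffeomorphisms and do not affect the conclusion, so without loss of generality $\mathbf{a}=\mathbf{0}$, $\mathbf{b}=\mathbf{0}$, and $Df(\mathbf{0})=I$, the identity matrix. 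The hypothesis that $f$ is continuously differentiable is used repeatedly, so it is kept explicit throughout.

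Second, I would introduce the auxiliary map $g(\mathbf{x})=\mathbf{x}-f(\mathbf{x})$, for which $Dg(\mathbf{0})=0$. By continuity of $Df$, I can fix a closed ball $\overline{B}(\mathbf{0},r)$ on which the operator norm satisfies $\|Dg(\mathbf{x})\|\le \tfrac12$. The mean value inequality then makes $g$ a $\tfrac12$-Lipschitz map on this ball, which is equivalent to the lower bound $\|f(\mathbf{x})-f(\mathbf{x}')\|\ge \tfrac12\|\mathbf{x}-\mathbf{x}'\|$ for all $\mathbf{x},\mathbf{x}'\in\overline{B}(\mathbf{0},r)$. This single estimate delivers the local injectivity of $f$ at once.

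Third, for the existence of preimages I would, for each small $\mathbf{y}$, define $\varphi_{\mathbf{y}}(\mathbf{x})=\mathbf{y}+g(\mathbf{x})=\mathbf{x}+\bigl(\mathbf{y}-f(\mathbf{x})\bigr)$, whose fixed points are exactly the solutions of $f(\mathbf{x})=\mathbf{y}$. Since $\varphi_{\mathbf{y}}$ is a $\tfrac12$-contraction, and maps $\overline{B}(\mathbf{0},r)$ into itself once $\|\mathbf{y}\|$ is small enough, the Banach fixed-point theorem produces a unique fixed point $\mathbf{x}=f^{-1}(\mathbf{y})$. Taking a small open ball $V$ around $\mathbf{0}$ and setting $U=f^{-1}(V)\cap B(\mathbf{0},r)$ then gives a neighborhood on which $f$ is a bijection onto the open set $V=f(U)$; openness of $f(U)$ follows because every point of $V$ has a full neighborhood of attainable targets by the same contraction argument.

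Finally, continuity of $f^{-1}$ is immediate from the lower Lipschitz bound of the second step, which shows $f^{-1}$ is $2$-Lipschitz. The hard part will be the differentiability of $f^{-1}$, and this is the step I would treat most carefully. I would fix $\mathbf{y}_0=f(\mathbf{x}_0)$, set $T=\bigl(Df(\mathbf{x}_0)\bigr)^{-1}$ (invertible after shrinking $U$, since the invertible matrices form an open set and $Df$ is continuous), and show directly that $f^{-1}(\mathbf{y})-f^{-1}(\mathbf{y}_0)-T(\mathbf{y}-\mathbf{y}_0)=o(\|\mathbf{y}-\mathbf{y}_0\|)$: the differentiability of $f$ at $\mathbf{x}_0$ supplies the $o(\|\mathbf{x}-\mathbf{x}_0\|)$ control of the residual, and the Lipschitz bound on $f^{-1}$ converts $\|\mathbf{x}-\mathbf{x}_0\|$ into a comparable multiple of $\|\mathbf{y}-\mathbf{y}_0\|$, so the error is genuinely of lower order. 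This yields $Df^{-1}(\mathbf{y}_0)=\bigl(Df(f^{-1}(\mathbf{y}_0))\bigr)^{-1}$; since $Df$ is continuous and matrix inversion is continuous on invertible matrices, $Df^{-1}$ is continuous, hence $f^{-1}\in C^1$ and $f:U\to f(U)$ is a diffeomorphism. Undoing the normalization of the first step recovers the theorem in the stated generality.
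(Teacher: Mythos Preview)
Your proof is the standard contraction-mapping route to the inverse function theorem and is correct as outlined; the normalization, the $\tfrac12$-Lipschitz control of $g$, the Banach fixed-point argument for surjectivity onto a small ball, and the differentiability computation for $f^{-1}$ are all sound.

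However, there is nothing to compare against: the paper does \emph{not} prove this theorem. It is quoted verbatim with the citation \cite{krantz} and then invoked as a tool in the proof of Theorem~\ref{th:estimation}, where the authors check that the Jacobian $\mathbf{M}$ of the moment map $f$ at the true parameter vector is nonsingular, and then conclude that a local inverse exists. So your proposal is not an alternative to anything in the paper; it simply supplies a proof of a classical result that the paper takes for granted from a textbook reference.
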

Consider the functional $f$ defined as:
$$
f(x_1,\cdots,x_L,y_1,\cdots,y_L)=
\left(\sum_{\ell=1}^L x_{\ell}\ ,\ \sum_{\ell=1}^L x_{\ell}y_\ell\ ,\  \cdots\ ,\ \sum_{\ell=1}^L x_{\ell}y_{\ell}^{2L-1}\right).$$
Consider ${\bf z}=(x_1,\cdots,x_L,y_1,\cdots,y_L)$ and denote by ${\bf c}= (c_1,\cdots,c_L,\rho_1,\cdots,\rho_L)$; 
we then have:
$$
{\bf M}\triangleq \left.\frac{\partial f}{\partial {\bf z}}\right|_{{\bf z}={\bf c}}=\begin{bmatrix}
1 &\cdots & 1 &0 &\cdots &0 \\
\rho_1 & \cdots &\rho_L & c_1 & \cdots & c_L\\
\vdots & \ddots & \ddots & \ddots & \ddots & \vdots \\
\rho_1^{2L-1} &\cdots & \rho_L^{2L-1} & (2L-1)c_1\rho_1^{2L-2}& \cdots & (2L-1)c_L\rho_L^{2L-2}
\end{bmatrix}.
$$
We will show that ${\bf M}$ is invertible by contradiction. Assume that ${\bf M}$ is singular. Then, there exists a non null vector $\boldsymbol{\lambda}=\left[\lambda_1,\cdots,\lambda_{2L}\right]^{T}$ such that ${\bf M}^{T} \boldsymbol{\lambda}={\bf 0}$. Consider the polynomial
$$
\mathrm{P}(\mathrm{X})=\sum_{i=0}^{2L-1}\lambda_{i+1}\mathrm{X}^i.
$$
We easily observe that ${\bf M}^{T} \boldsymbol{\lambda}={\bf 0}$ implies that 
$$
\mathrm{P}(\rho_\ell)= \mathrm{P}'(\rho_\ell) = 0\ ,\quad \textrm{for}\ 1\le \ell \le L\ .
$$
In particular, the multiplicity of each $\rho_\ell$ is at least $2$. This is impossible since the degree of $\mathrm{P}$ is at most $2L-1$ (recall that all the eigenvalues $\rho_\ell$ are pairwise distinct). Matrix ${\bf M}$ is therefore invertible. The inverse function theorem then applies.
Denote by $\psi_i=\sum_{k=1}^L c_k \rho_k^i$ for $0\le i \le 2L-1$. There exists a neighborhood $U$ of $(c_1,\cdots,c_L,\rho_1,\cdots,\rho_L)$ and a neighborhood $V$ of $(\psi_0,\cdots,\psi_{2L-1})$ such that $f$ is a diffeomorphism from $U$ onto $V$.
On the other hand, we have:
$$
\hat{\gamma}_i-\gamma_i \xrightarrow[]{a.s.} 0.
$$
As $\gamma_i-\psi_i \to 0$, therefore, almost surely, $(\hat{\gamma}_0,\cdots,\hat{\gamma}_{2L-1})\in V$ for $N$ and $M$ large enough.
Hence, a real solution $$(\hat{c}_1,\cdots,\hat{c}_L,\hat{\rho}_1,\cdots,\hat{\rho}_L)=f^{-1}(\hat{\gamma}_0,\cdots,\hat{\gamma}_{2L-1})\in U$$ exists. And by the continuity, one can get easily that:
$$
\hat{c}_\ell-c_\ell \xrightarrow[N,M\rightarrow\infty]{a.s.} 0 \quad \textrm{and}\quad 
\hat{\rho}_\ell-\rho_\ell \xrightarrow[N,M\rightarrow\infty]{a.s.} 0 \quad \textrm{for} \quad   1\le \ell \le L \ .
$$

\begin{enumerate}
\item[2)]
{\it Uniqueness of the solution of the system.}
\end{enumerate}
Consider the polynomial $Q$ with degree $L$ defined as:
$$
\mathrm{Q}(\mathrm{X})=\prod_{\ell=0}^L (X-\hat{\rho}_\ell)\stackrel{\triangle}{=}\sum_{\ell=0}^L s_\ell \mathrm{X}^\ell
$$
where $s_L=1$. Denote by ${\bf s}={\left[s_0,\cdots,s_{L-1}\right]}^{T}$. It is clear that $g:\left(\hat{\rho}_1,\cdots,\hat{\rho}_L\right)\rightarrow {\bf s}$ is a homeomorphism. It remains thus to show that vector ${\bf s}$ is uniquely determined by $ (\hat{\gamma}_0,\cdots,\hat{\gamma}_{2L-1})$.

It is clear that each $\hat{\rho}_k$ is also the zero of the polynomial functions $\mathrm{R}_{\ell}(X)$ given by:
$$
\mathrm{R}_{\ell}(\mathrm{X})=\sum_{i=0}^L s_i \mathrm{X}^{i+\ell}\ ,
$$
where $0\le \ell \le L-1$. In other words, for $1\le k\le L$, we get:
$$
\sum_{i=0}^L s_i\hat{\rho}_k^{\ell+i}=0,
$$
or equivalently:
\begin{equation}
\sum_{i=0}^L s_i \hat{c}_k\hat{\rho}_k^{\ell+i}=0.
\label{eq:nonsum}
\end{equation}
Summing \eqref{eq:nonsum} over $k$, we obtain:
\begin{equation}
\sum_{i=0}^L \hat{\gamma}_{i+\ell}s_i=0\ , 
\label{eq:withsl}
\end{equation}
for $0\le \ell\le L-1$. Since $s_L=1$,  \eqref{eq:withsl} becomes:
\begin{equation}
 \hat{\gamma}_{L+\ell}+\sum_{i=0}^{L-1}s_i\hat{\gamma}_{i+\ell}=0\ ,
 \label{eq:nonmatricial}
\end{equation}
for $0\le \ell\le L-1$. 

Writing \eqref{eq:nonmatricial} in a matrix form, we get:
$
\boldsymbol{\Gamma}{\bf s}=-{\bf b},
$
where
$$
\boldsymbol{\Gamma}=\begin{bmatrix}
\hat{\gamma}_0 &\hat{\gamma}_1 &\cdots & \hat{\gamma}_{L-1}\\
\hat{\gamma}_1 & \hat{\gamma}_2 & \cdots &  \hat{\gamma}_{L}\\
\vdots & \ddots & \ddots & \vdots \\
\hat{\gamma}_{L-1} & \hat{\gamma}_{L} & \cdots & \hat{\gamma}_{2L-2}
\end{bmatrix} \qquad \textnormal{and}\qquad {\bf b}=\begin{bmatrix}
\hat{\gamma}_L\\
\vdots \\
\hat{\gamma}_{2L-1}
\end{bmatrix}.
$$

On the other hand, we have ${\bf \Gamma}={\bf A} {\bf D} {\bf A}^T$, where ${\bf D}={\rm diag}(\hat{c}_1, \hat{c}_2,\cdots,\hat{c}_L)$ and
$$
{\bf A}=\begin{bmatrix}
1 & 1  & \cdots & 1 \\
\hat{\rho}_1 & \hat{\rho}_2 & \cdots & \hat{\rho}_L \\
\vdots & \vdots &  & \vdots \\
\hat{\rho}_1^{L-1} & \hat{\rho}_2^{L-1} & \cdots &\hat{\rho}_L^{L-1}
\end{bmatrix}.
$$
 Then, $${\rm det}(\boldsymbol{\Gamma})=\prod_{k=1}^L \hat{c}_k\prod_{1\leq i< j\leq L}( \hat{\rho}_i- \hat{\rho}_j)^2>0.$$
Therefore, the vector ${\bf s}$ is then uniquely determined by $\boldsymbol{\Gamma}$ and  ${\bf b}$ and is given by:
$$
{\bf s}=-\boldsymbol{\Gamma}^{-1}{\bf b}.
$$
Hence the unicity. The proof is complete.

\end{proof}

\section{Fluctuations of the estimator}
\label{sec:fluctuations}
In this section, we shall study the fluctuations of the multiplicities and eigenvalues estimators  $(\hat{c}_1,\cdots,\hat{c}_L,\hat{\rho}_1,\cdots,\hat{\rho}_L)$ introduced in Theorem \ref{th:estimation}. In particular, we establish a central limit theorem for the whole vector in the case where the entries of matrix ${\bf X}_N$ are Gaussian.


\begin{theorem}

  Let Assumptions \ref{ass:RN}, \ref{ass:NM}, 3b hold true.  Let
  $(\hat{c}_1,\cdots,\hat{c}_L,\hat{\rho}_1,\cdots,\hat{\rho}_L)$ be
  the estimators obtained in Theorem \ref{th:estimation}. Then

$$M\left[\hat{c}_1-\frac{N_1}{N},\cdots, \hat{c}_L-\frac{N_L}{N}, \hat{\rho}_1-\rho_1,\cdots,\hat{\rho}_L-\rho_L\right] \xrightarrow[N,M\rightarrow \infty]{\mathcal D} \mathcal N_{2L}(0, \boldsymbol{\Theta}) $$
where $\boldsymbol{\Theta}$ is a $2L\times 2L$ matrix admitting the
decomposition $\boldsymbol{\Theta}={\bf M}^{-1} {\bf W} {{\bf
    M}^{-1}}^T$ with
$${\bf M}=
\left[\begin{matrix}
1 & \cdots & 1&0& \cdots & 0\\
\rho_1 & \cdots & \rho_L & c_1 & \cdots & c_L \\
\vdots&\ddots&\ddots & \ddots&\ddots& \vdots\\
\rho_1^{2L-1}&\cdots & \rho_L^{2L-1}&(2L-1) c_1 \rho_1^{2L-2} & \cdots & (2L-1) c_L \rho_L^{2L-2}\\
\end{matrix} \right]$$ and
$$
{\bf W}=\left[
\begin{BMAT}{cc}{cc}
0&{\bf 0}\\
{\bf 0} & {\bf V}
\end{BMAT}\right].
$$

where ${\bf V}$ is a $(2L-1) \times (2L-1)$ matrix whose entries are
given by (for $1 \leq k,\ell \leq 2L-1$):
$${V}_{k,\ell}=  -\frac{(-1)^{k+\ell}}{4 \pi^2 c^2}
\oint_{\mathcal C_1} \oint_{\mathcal C_2} \left(\frac{\underline{m}'(z_1) \underline{m}'(z_2)} {(\underline{m}(z_1)- \underline{m}(z_2))^2}-\frac{1}{(z_1-z_2)^2}\right) \times \frac{1}{\underline{m}^{k}(z_1) \underline{m}^{\ell}(z_2)} d\, z_1 d\, z_2$$
where $\mathcal{C}_1$ and $\mathcal{C}_2$ are two closed contours non-overlapping which contain the support $\mathcal S$ of $F$ and are counterclockwise oriented.

\end{theorem}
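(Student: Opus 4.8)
The plan is to reduce this multivariate CLT to a central limit theorem for the empirical moment estimators $(\hat\gamma_1,\dots,\hat\gamma_{2L-1})$ of Lemma~\ref{lem:moments}, and then to transport the fluctuations through the inverse of the map $f$ built in the proof of Theorem~\ref{th:estimation} by a delta-method argument. Recall that $(\hat c_1,\dots,\hat c_L,\hat\rho_1,\dots,\hat\rho_L)=f^{-1}(\hat\gamma_0,\dots,\hat\gamma_{2L-1})$, while the finite-sample centering point ${\bf c}_N=(\tfrac{N_1}{N},\dots,\tfrac{N_L}{N},\rho_1,\dots,\rho_L)$ is the exact $f$-preimage of the finite-$N$ moments $\psi_N=(1,\psi_{1,N},\dots,\psi_{2L-1,N})$ with $\psi_{k,N}=\sum_{\ell}\tfrac{N_\ell}{N}\rho_\ell^{k}$. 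Since $f$ is a diffeomorphism near the limit ${\bf c}=(c_1,\dots,c_L,\rho_1,\dots,\rho_L)$ with Jacobian ${\bf M}$ (shown invertible in Theorem~\ref{th:estimation}) and ${\bf c}_N\to{\bf c}$, a first-order expansion of $f^{-1}$ gives $M(\hat{\bf z}-{\bf c}_N)={\bf M}^{-1}\,M(\hat\gamma-\psi_N)+o_P(1)$, so that a Gaussian limit with covariance ${\bf W}$ for $M(\hat\gamma-\psi_N)$ is mapped to the Gaussian limit with covariance ${\bf M}^{-1}{\bf W}{\bf M}^{-1\,T}=\boldsymbol\Theta$. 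This isolates the genuine probabilistic content in the fluctuations of the moment vector.

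For that step I would rely on the Bai--Silverstein CLT for linear spectral statistics \cite{BAI04}. Under Assumption~3b, the analytic process $z\mapsto M\big(\mrnhat(z)-\mn(z)\big)$, indexed by $z$ on a contour $\mathcal C$ enclosing $\mathcal S$, converges weakly to a centered Gaussian process $Z(\cdot)$ with covariance kernel $\Gamma(z_1,z_2)=\frac{\underline m'(z_1)\underline m'(z_2)}{(\underline m(z_1)-\underline m(z_2))^2}-\frac{1}{(z_1-z_2)^2}$; the complex Gaussian hypothesis is precisely what cancels the fourth-cumulant contribution and leaves this kernel. Each $\hat\gamma_k$ is an analytic contour functional of $\mrnhat$, so writing $\mrnhat=\mn+M^{-1}Z_N$ and linearizing the integrands $z\,\mrnhat'/\mrnhat$ and $\mrnhat^{-k}$ around $\mn$ (integrating by parts on $\mathcal C$ to eliminate the derivative $\mrnhat'$) expresses $M(\hat\gamma_k-\psi_{k,N})$ as a fixed linear contour functional of $Z_N$. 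The continuous mapping theorem then yields joint asymptotic normality of $\big(M(\hat\gamma_k-\psi_{k,N})\big)_k$, and substituting $\Gamma$ into the product of the two linearized functionals produces, after residue computations, the double-contour expression $V_{k,\ell}$; the block form of ${\bf W}$, with its vanishing first row and column, simply records that $\hat\gamma_0\equiv1$ carries no fluctuation.

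Two points require genuine care. First, the centering: the Bai--Silverstein theorem centers $\tr g(\rnhat)$ at $N\int g\,d\underline{F}_N$, so I must check that the deterministic bias of $\hat\gamma_k$ relative to $\psi_{k,N}$ is $o(M^{-1})$; under Assumption~3b the asymptotic mean of the fluctuation vanishes, which both removes a bias term and makes the limit law centered. This is exactly why the theorem centers the multiplicity estimates at the finite ratios $N_\ell/N$ rather than at the limits $c_\ell$. Second, the passage from the functional CLT for $Z_N$ to the CLT for the finitely many contour functionals must be justified by tightness and analyticity of the process on $\mathcal C$, so that the continuous mapping theorem applies to the (analytic, hence continuous) integration maps.

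The main obstacle is the explicit evaluation of the covariance: verifying that the linearized weights for $\hat\gamma_k$ are exactly the powers $\underline m^{-k}$ that appear in $V_{k,\ell}$, and that integrating $\Gamma$ against them over the two contours $\mathcal C_1,\mathcal C_2$ reproduces the stated formula with the correct sign $(-1)^{k+\ell}$ and prefactor $1/(4\pi^2 c^2)$. This is where the change of variable $w=-1/\underline m$ and careful residue bookkeeping are indispensable, and where an index or orientation slip is easiest to make; the remainder of the argument is structural.
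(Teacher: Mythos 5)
Your proposal is correct and follows essentially the same route as the paper: both prove a CLT for the moment vector $M\left[\hat{\gamma}_0-\gamma_0,\cdots,\hat{\gamma}_{2L-1}-\gamma_{2L-1}\right]$ via the Bai--Silverstein Gaussian-process limit of $X_N(z)=M\left(\mrnhat(z)-\mn(z)\right)$ on a contour enclosing $\mathcal S$ together with the continuous mapping theorem (the centering is in fact exact, since the contour functionals of $\mn$ equal the finite-$N$ moments $\gamma_k$ identically, so no $o(M^{-1})$ bias check is needed), and both then transfer the limit through the invertible Jacobian ${\bf M}$. The only material difference is cosmetic: where you linearize the integrands and apply a delta method around the moving center, the paper works with the exact algebraic identity ${\bf w}_M=\hat{\bf M}_M\,{\bf q}_M$ followed by Slutsky's lemma, and invokes an extended continuous mapping theorem (Kallenberg) on the functionals $\Phi_{N,k}(X_N,u_N)$ and $\Upsilon_N(X_N,X_N',u_N,u_N')$ directly, which avoids controlling Taylor remainders.
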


%
%
%
\begin{proof}
The proof relies on the same techniques used in \cite{YAO11journal}. We outline hereafter the main steps and provide then the details.

By Theorem \ref{th:estimation}, the estimate vector $\left(\hat{c}_1,\cdots,\hat{c}_L,\hat{\rho}_1,\cdots,\hat{\rho}_L\right)$ verifies the following system of equations:
$$
\left\{\begin{array}{l}
\sum_{i=1}^L \hat{c}_i=1,\\
\sum_{i=1}^L \hat{c}_i \hat{\rho}_i=\hat{\gamma}_1,\\
\sum_{i=1}^L \hat{c}_i \hat{\rho}_i^k=\hat{\gamma}_k
\ \textrm{for}\ 2\leq k\leq 2L-1,
\end{array}\right.
$$
where the $\hat{\gamma}_i$'s are the moment estimates provided by Lemma \ref{lem:moment}.


Using the integral representation of $\sum_{i=1}^L c_i\rho_i$ and  $\sum_{i=1}^L c_i\rho_i^k$ (cf. Formula (\ref{eq:integral})), we get:
$$
\left\{\begin{array}{l}
\sum_{i=1}^L M\left(\hat{c}_i-\frac{N_i}{N}\right)=0,\\
\sum_{i=1}^L M\left(\hat{c_i}\hat{\rho}_i-\frac{N_i}{N}\rho_i\right)=-\frac{M^2}{2N {\bf i} \pi}\oint_{\mathcal{C}}z\left(\frac{\mrnhat'(z)}{\mrnhat(z)}-\frac{\mn'(z)}{\mn(z)}\right)dz,\\
\sum_{i=1}^L M\left(\hat{c}_i \hat{\rho}_i^k-\frac{N_i}{N}\rho_i^k\right)=\frac{M^2(-1)^{k}}{2 {\bf i} (k-1)N\pi}\oint_{\mathcal{C}}\left(\frac{1}{\mrnhat(z)^{k-1}}-\frac{1}{\mn(z)^{k-1}}\right)dz, \hspace{0.1cm} 2\leq k\leq 2L-1.
\end{array}\right.
$$
Denote by $C(\mathcal{C},\mathbb{C})$ the set of continuous functions from $\mathcal{C}$ to $\mathbb{C}$ endowed with the supremum norm $\|u\|_{\infty}=\sup_{\mathcal{C}}|u|$. In the same way as in \cite{YAO11journal}, consider the process:
$
(X_N,X_N',u_N,u_N'):\mathcal{C}\rightarrow \mathbb{C},
$
where
\begin{align*}
X_N(z)&=M\left(\mrnhat(z)-\mn(z)\right),\\
X_N'(z)&=M\left(\mrnhat'(z)-\mn'(z)\right),\\
u_N(z)&=\mrnhat(z), \hspace{0.1cm} u_N'(z)=\mrnhat'(z).
\end{align*}
Then, $M\sum_{i=1}^L \left( \hat{c}_i\hat{\rho}_i-\frac{N_i}{N}\rho_i\right)$ can be written as:
\begin{align*}
M \sum_{i=1}^L \left(\hat{c}_i\hat{\rho}_i-\frac{N_i}{N}\rho_i\right)&=-\frac{M}{2{\bf i}N \pi }\oint_{\mathcal{C}}z\left(\frac{\mn(z)X_N'(z)-u_N'(z)X_N(z)}{\mn(z)u_N(z)}\right)dz,\\
&\triangleq \Upsilon_N(X_N,X_N',u_N,u_N'),
\end{align*}
where
$$
\Upsilon_N(x,x',u,u')=-\frac{M}{2{\bf i} N\pi }\oint_{\mathcal{C}}z\left(\frac{\mn(z)x'(z)-u'(z)x(z)}{\mn(z)u(z)}\right)dz.
$$
On the other hand, using the decomposition $a^k-b^k=(a-b)\sum_{\ell=0}^{k-1}a^{\ell}b^{k-1-\ell}$, we can prove that: 
\begin{align*}
\sum_{i=1}^L M\left(\hat{c}_i\hat{\rho}_i^k-\frac{N_i}{N}\rho_i^k\right)&=\frac{M^2(-1)^{k}}{2{\bf i} N \pi (k-1)}\oint_{\mathcal{C}}\sum_{\ell=0}^{k-2}-\frac{\mrnhat(z)-\underline{m}_N(z)}{\mrnhat^{\ell+1}(z)\underline{m}_N^{k-1-\ell}(z)} dz\\
&=\frac{M(-1)^{k+1}}{2{\bf i} N(k-1) \pi} \oint_{\mathcal{C}}\sum_{\ell=0}^{k-2}X_N(z)u_N(z)^{-\ell-1}\mn(z)^{-k+1+\ell} dz\\
&\triangleq \Phi_{N,k}(X_N,u_N),
\end{align*}
for $2 \leq k \leq {2L-1}$, where
$$
\Phi_{N,k}(x,u)=\frac{M(-1)^{k+1}}{2{\bf i} N(k-1)\pi}\oint_{\mathcal{C}}\sum_{\ell=0}^{k-2}x(z)u(z)^{-\ell-1}\mn(z)^{-k+1+\ell}dz.
$$
The main idea of the proof of the theorem lies in the following steps:
\begin{enumerate}
\item Prove the convergence of $\left[\Upsilon_N(X_N,X_N',u_N,u_N'),\Phi_{N,2}(X_N,u_N),\cdots,\Phi_{N,L}(X_N,u_N) \right]^T$ to a Gaussian random vector with the help of the continuous mapping theorem.
\item Compute the limiting covariance between $M \sum_{i=1}^L \left(\hat{c}_i\hat{\rho}_i^k-\frac{N_i}{N}\rho_i^k \right)$ and $M \sum_{i=1}^L  \left(\hat{c}_i\hat{\rho}_i^{\ell}-\frac{N_i}{N}\rho_i^{\ell} \right)$.
\item Conclude by expressing $M\left[\hat{c}_1-\frac{N_1}{N},\cdots,\hat{c}_L-\frac{N_L}{N},\hat{\rho}_1-\rho_1,\cdots,\hat{\rho}_L-\rho_L\right]^T$ as a linear function of $M\left[\hat{\gamma}_0-\gamma_0,\cdots , \hat{\gamma}_{2L-1}-\gamma_{2L-1}  \right]^T$.
\end{enumerate}

\subsection{Fluctuations of the moments}
The convergence of $\Upsilon_N(X_N,X_N',u_N,u_N')$ to a Gaussian random variable has been established in \cite{YAO11journal}.
It has been proved that:
$$
\Upsilon_N(X_N,X_N',u_N,u_N')\xrightarrow[M,N\to\infty]{\mathcal{D}}\Upsilon(X,Y,\underline{m},\underline{m}')
$$
where
$$
\Upsilon(x,y,v,w)=\frac{1}{2 {\bf i}\pi c }\oint_{\mathcal{C}}z\left(\frac{\underline{m}(z)y(z)-w(z)x(z)}{\underline{m}(z)v(z)}\right)dz.
$$ and $(X,Y)$ is a Gaussian process with mean function zero and covariance function given by:

\begin{align*}
{\rm cov}\left(X(z),X(\tilde{z})\right)&=\frac{\underline{m}'(z)\underline{m}'(\tilde{z})}{\left(\underline{m}(z)-\underline{m}(\tilde{z})\right)^2}-\frac{1}{(z-\tilde{z})^2} \triangleq \kappa(z,\tilde{z}), \\
{\rm cov}\left(Y(z),X(\tilde{z})\right)&= \frac{\partial}{\partial z}\kappa(z,\tilde{z}),\\
{\rm cov}\left(X(z),Y(\tilde{z})\right)&=\frac{\partial}{\partial \tilde{z}}\kappa(z,\tilde{z}),\\
{\rm cov}\left(Y(z),Y(\tilde{z})\right)&=\frac{\partial^2}{\partial z\partial\tilde{z}}\kappa(z,\tilde{z}).
\end{align*}
We also need to prove the convergence in distribution of $\Phi_{N,k}(X_N,u_N)$, for $2\leq k\leq L$. The cornerstone of the proof is the convergence of $X_N:\mathcal{C}\rightarrow \mathbb{C}$ to a Gaussian process $X(z)$ which is ensured in \cite[Lemma 9.11]{SIL06}. Since $u_N\xrightarrow[N,M\to+\infty]{}\underline{m}$, $(X_N,u_N)$ converges in distribution  to $(X,\underline{m})$.

Let $\Phi_k(x,u)$ be defined as:
$$
\Phi_k(x,u)=\frac{(-1)^{k}}{2{\bf i} c \pi}\oint_{\mathcal{C}}x(z)u(z)^{-k}dz.
$$

We want to show that $\Phi_k(X_N,u_N)$ converges in distribution to a Gaussian vector. The continuous mapping theorem is useful to transform one convergence to another.
\begin{proposition}[cf. {\cite[Th. 4.27]{KAL02}}]
\label{th:CMT}
For any metric spaces $S_1$ and $S_2$, let $\xi$, $ (\xi_n)_{n\geq
 1}$ be random elements in $S_1$ with $\xi_n \xrightarrow[n\to
\infty]{\mathcal D} \xi$ and consider some measurable mappings
$f$, $(f_n)_{n\ge 1}$: $S_1 \to S_2$ and a measurable set $\Gamma \subset S_1$
with $\xi \in \Gamma$ a.s. such that $f_n(s_n) \rightarrow f(s)$ as $s_n
\rightarrow s \in \Gamma$. Then $f_n(\xi_n) \xrightarrow[n\to
\infty]{\mathcal D} f(\xi)$.
\end{proposition}

Consider the set:
$$
\Gamma=\left\{(x,u)\in C^2\left(\mathcal{C},\mathbb{C}\right), \inf_{\mathcal{C}}|u|>0\right\}.
$$
Then, since $\inf_{\mathcal{C}}|\underline{m}|>0$ (see \cite[Section 9.12]{SIL06}), the dominated convergence theorem implies that the convergence of $\left(x_N,y_N\right)\rightarrow \left(x, y \right)\in\Gamma$ leads to $\Phi_{N,k}(x_N,y_N)\rightarrow \Phi_k(x, y)$. The continuous mapping theorem applies, thus giving:
$$
\Phi_{N,k}(X_N,u_N)\xrightarrow[M,N\to\infty]{\mathcal{D}}\Phi_k(X,u).
$$
It now remains to prove that the limit law $\Phi_k(X,u)$ is Gaussian. For that, it suffices to notice that the integral can be written as the limit of a finite Riemann sum and that a finite Riemann sum of the elements of a Gaussian random vector is still Gaussian.

The convergence of $\Upsilon_N(X_N,X_N',u_N,u_N')$ and $\Phi_{N,k}(X_N,u_N)$ is not sufficient to conclude about that of the whole vector. The additional requirement is to prove the convergence to a Gaussian distribution of any linear combination of  $\left[\Upsilon_N(X_N,X_N',u_N,u_N'),\Phi_{N,2}(X_N,u_N),\cdots,\Phi_{N,L}(X_N,u_N) \right]^T$, which can be easily established in the same way as before.
It implies that this vector converges to a Gaussian vector. This ends the first step of the proof. 

\subsection{Computation of the variance}
We now come to the second step. We shall therefore evaluate the quantities:
\begin{align*}
{\bf V}_{1,1}&=\mathbb{E}\left[\Upsilon(X,Y,\underline{m},\underline{m}')\Upsilon(X,Y,\underline{m},\underline{m}')\right],\\
{\bf V}_{1,k}&={\bf V}_{k,1}=\mathbb{E}\left[\Upsilon(X,Y,\underline{m},\underline{m}')\Phi_k(X,\underline{m})\right],\hspace{0.5cm} 2\leq k\leq L,\\
{\bf V}_{k,\ell}&=\mathbb{E}\left[\Phi_k(X,\underline{m})\Phi_{\ell}(X,\underline{m})\right], \hspace{0.5cm}  2\leq k, \ell\leq 2L-1.
\end{align*}
The details of the calculations are in Appendix \ref{app:variance} and yield:
\begin{equation}
{\bf V}_{k,\ell}=-\frac{(-1)^{k+\ell}}{4\pi^2c^2}\oint_{\mathcal{C}_1}\oint_{\mathcal{C}_2}
\left[\frac{\underline{m}'(z_1)\underline{m}'(z_2)}{\left(\underline{m}(z_1)-\underline{m}(z_2)\right)^2}-\frac{1}{(z_1-z_2)^2}\right]\frac{1}{\underline{m}^k(z_1)\underline{m}^{\ell}(z_2)}dz_1 dz_2\ ,
\label{eq:Vkl}
\end{equation}
for $1\leq k,\ell \leq 2L-1$. Let ${\bf
  w}_M=M\left[\hat{\gamma}_0-\gamma_0, \cdots,
  \hat{\gamma}_{2L-1}-\gamma_{2L-1} \right]^{T}$.

We have just proved that vector ${\bf w}_M$ converges asymptotically to:
$$
{\bf w}_M\xrightarrow[N,M\to+\infty]{\mathcal{D}}\mathcal{N}_{2L}(0,{\bf W}),
$$
where
$$
{\bf W}=\left[
\begin{BMAT}{cc}{cc}
0& {\bf 0}\\
{\bf 0} & {\bf V}
\end{BMAT}\right]
$$
and ${\bf V}$ is the $(2L-1)\times (2L-1)$ matrix whose entries $V_{k,l}$ are given by \eqref{eq:Vkl}.
\begin{remark}
The zeros in the variance are simply from the fact that $\hat{\gamma}_0-\gamma_0=0$.

\end{remark}
\subsection{Fluctuations of the  eigenvalues estimates}

To transfer this convergence to ${\bf q}_M\triangleq M\left[\hat{c}_1-\frac{N_1}{N},\cdots,\hat{c}_L-\frac{N_L}{N},\hat{\rho}_1-\rho_1,\cdots,\hat{\rho}_L-{\rho}_L\right]^{T}$, we shall use Slutsky's lemma which is as below:

\begin{lemma}[cf. \cite{VAN00}] Let ${\bf X}_n$, ${\bf Y}_n$ be sequences of vector or matrix random elements. If ${\bf X}_n$ converges in distribution to a random element $ {\bf X}$, and ${\bf Y}_n$ converges in probability to a constant ${\bf C}$, then

$${\bf Y}_n^{-1} {\bf X}_n \xrightarrow[]{\mathcal D} {\bf C}^{-1} {\bf X}$$
provided that ${\bf C}$ is invertible.

\end{lemma}

We will show that ${\bf w}_M$ satisfies the following linear system:

\begin{equation}\label{eq:slutsky}
{\bf w}_M = \hat{\bf M}_M  {\bf q}_M
\end{equation}
where $\hat{\bf M}_M$ converges in probability to ${\bf M}$ which is given by

$$
{\bf M}=\begin{bmatrix}
1 &\cdots & 1 &0 &\cdots &0 \\
\rho_1 & \cdots &\rho_L & c_1 & \cdots & c_L\\
\vdots & \ddots & \ddots & \ddots & \ddots & \vdots \\
\rho_1^{2L-1} &\cdots & \rho_L^{2L-1} & (2L-1)c_1\rho_1^{2L-2}& \cdots & (2L-1)c_L\rho_L^{2L-2}
\end{bmatrix}.
$$

To this end, let us work out the expression of $w_{k,M}$, the $k$-th element of ${\bf w}_M$.

If $k=1$, it is easy to see that $w_{1,M}=0$.

For $k\geq 2$, $w_{k,M}$ is given by:
\begin{align*}
w_{k,M}&=M\sum_{i=1}^L\left(\hat{c}_i\hat{\rho}_i^{k-1}-\frac{N_i}{N}{\rho}_i^{k-1}\right)\\
&=M\sum_{i=1}^L\left(\hat{c}_i\hat{\rho}_i^{k-1}-\frac{N_i}{N}\hat{\rho}_i^{k-1}+\frac{N_i}{N}\hat{\rho}_i^{k-1}-\frac{N_i}{N}\rho_i^{k-1}\right)\\
&=M\sum_{i=1}^L \left(\left(\hat{c}_i-\frac{N_i}{N}\right)\hat{\rho}_i^{k-1}+\frac{N_i}{N}(\hat{\rho}_i-\rho_i)\sum_{\ell=0}^{k-2}\hat{\rho}_i^{\ell}\rho_i^{k-2-\ell}\right).\\
\end{align*}

Then define
$$\hat{\bf M}_M=
\left(\begin{matrix}
1 & \cdots & 1&0& \cdots & 0\\
\hat{\rho}_1 & \cdots & \hat{\rho}_L & \frac{N_1}{N} & \cdots & \frac{N_L}{N} \\
\vdots&\ddots&\ddots & \ddots&\ddots& \vdots\\
\hat{\rho}_1^{2L-1}&\cdots & \hat{\rho}_L^{2L-1}&  \frac{N_1}{N} \sum_{\ell=0}^{2L-2}\hat{\rho}_1^\ell\rho_1^{2L-2-\ell} & \cdots & \frac{N_L}{N} \sum_{\ell=0}^{2L-2}\hat{\rho}_L^{\ell}\rho_L^{2L-2-\ell} \\
\end{matrix} \right).$$

We can see easily that the equation (\ref{eq:slutsky}) is satisfied and $\hat{\bf M}_M$ converges in probability to $\bf M$. It remains to check that ${\bf M}$ is invertible. Note that the non-singularity of matrix ${\bf M}$ has been already established in Section \ref{sec:estimation}, where this property was required to prove the existence of an estimator.
As a consequence, using Slutsky's lemma, we deduce that:
$$
\hat{\bf M}_M{\bf q}_M\xrightarrow[M,N\to+\infty]{\mathcal{D}}\mathcal{N}_{2L}(0,{\bf W})
$$
and

$${\bf q}_M \xrightarrow[M,N\to+\infty]{\mathcal{D}} \mathcal{N}_{2L}\left( 0, {\bf M}^{-1} {\bf W} ({\bf M}^{-1})^T \right).$$
This ends the proof for the fluctuation.
\end{proof}

\section{Simulations}
\label{sec:simulations}

In this section, we compare the performance of the proposed method with that of Mestre's estimator in \cite{MES08b}. We also verify by simulations the accuracy of the Gaussian approximation stated by the Central Limit theorem.

In the first experiment, we consider a covariance matrix ${\bf R}_N$ with three different eigenvalues $\left(\rho_1,\rho_2,\rho_3\right)=(1,3,5)$ uniformly distributed i.e, $\frac{N_1}{N}=\frac{N_2}{N}=\frac{N_3}{N}=\frac{1}{3}$. 
We set the ratio between the number of samples and the number of variables $\frac{N}{M}$ to $3/8$, a situation for which the separability does not obviously hold (see Fig. \ref{fig:clusters2}). Since the knowledge of the multiplicities is available when using the estimator in \cite{MES08b}, we assume, the same for the proposed method. Hence, the estimation of the polynomial whose roots are $\rho_i$ could not be as described previously. It is actually performed using the  Newton-Girard formulas, which relates the coefficients of a  polynomial to the power sum of its roots.


We compare the performance of both estimators for different values of  $M$ and $N$ satisfying a constant ratio $c=N/M=3/8$. Fig \ref{fig:comparison}, the experienced mean square error (MSE)  in the estimation process for each method, where the MSE is given by:
$$
{\rm MSE}\triangleq \sum_{i=1}^3 |\hat{\rho}_i-\rho_i|^2.
$$
\begin{minipage}[c]{0.43\textwidth}
\begin{figure}[H]
  \begin{center}
  \begin{tikzpicture}[font=\footnotesize,scale=0.75]
    \renewcommand{\axisdefaulttryminticks}{8}
    \tikzstyle{every major grid}+=[style=densely dashed]
    \tikzstyle{every pin}=[fill=white,draw=black,font=\footnotesize,edge style={<-}]
    \tikzstyle{every axis x label}+=[yshift=5pt]
   \tikzstyle{every axis legend}+=[cells={anchor=west},fill=white,
       at={(0.98,0.98)}, anchor=north east, font=\scriptsize ]

    \begin{axis}[
      grid=major,
      xmajorgrids=true,
      ymajorgrids=true,
      xlabel={$N$},
      ylabel={MSE in dB},
      xmin=30,
      xmax=150,
      ymin=-22,
      ymax=-6,
      ]
	  \addplot[smooth,blue,line width=0.5pt,mark=x]plot coordinates{(30.000000,-9.507336)(60.000000,-11.647864)(90.000000,-12.176486)(120.000000,-12.285698)(150.000000,-12.436594)
};
      \addplot[smooth,red,line width=0.5pt,mark=o] plot coordinates{
	  	 (30.000000,-6.861267)(60.000000,-13.765248)(90.000000,-17.594329)(120.000000,-19.574075)(150.000000,-21.617828)	  };
	 \legend{{Mestre Estimator},{Proposed Estimator}}
\end{axis}
\end{tikzpicture}
\caption{Experienced MSE  with $N$ when $\frac{N}{M}=\frac{3}{8}$ and $(\rho_1,\rho_2,\rho_3)=(1,3,5)$}
\label{fig:comparison}
\end{center}
\end{figure}
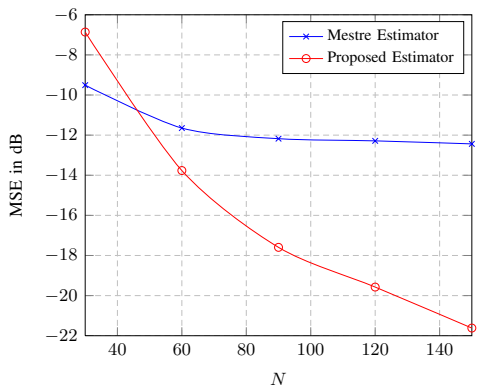
\end{minipage}\hfill
\begin{minipage}[c]{0.43\textwidth}
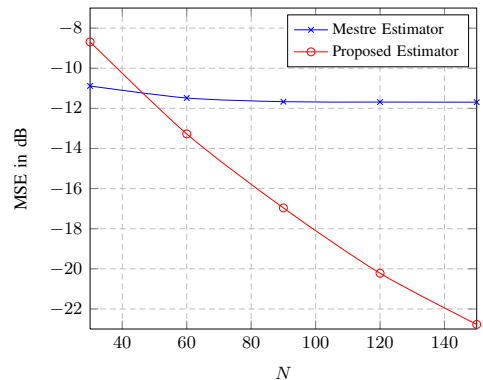
\begin{figure}[H]
  \begin{center}
  \begin{tikzpicture}[font=\footnotesize,scale=0.75]
    \renewcommand{\axisdefaulttryminticks}{8}
    \tikzstyle{every major grid}+=[style=densely dashed]
    \tikzstyle{every pin}=[fill=white,draw=black,font=\footnotesize,edge style={<-}]
    \tikzstyle{every axis x label}+=[yshift=5pt]
    \tikzstyle{every axis legend}+=[cells={anchor=west},fill=white,
        at={(0.98,0.98)}, anchor=north east, font=\scriptsize ]

    \begin{axis}[
     grid=major,
      xmajorgrids=true,
      ymajorgrids=true,
      xlabel={$N$},
      ylabel={MSE in dB},
      xmin=30,
      xmax=150,
      ymin=-23,
      ymax=-7,
      ]
	   \addplot[smooth,blue,line width=0.5pt,mark=x]plot coordinates{(30.000000,-10.883966)(60.000000,-11.483270)(90.000000,-11.667406)(120.000000,-11.685986)(150.000000,-11.692651)
};
      \addplot[smooth,red,line width=0.5pt,mark=o] plot coordinates{
	  	  (30.000000,-8.688546)(60.000000,-13.275337)(90.000000,-16.964299)(120.000000,-20.221400)(150.000000,-22.770789)
	  };
	 \legend{{Mestre Estimator},{Proposed Estimator}}
\end{axis}
\end{tikzpicture}
\caption{Experienced MSE  with $N$ when $\frac{N}{M}=\frac{3}{8}$ and $(\rho_1,\rho_2,\rho_3)=(1,1.5,2)$}
\end{center}
\label{fig:comparison2}
\end{figure}
\end{minipage}

We note that as $M$ and $N$ increase, the estimator in \cite{MES08b} exhibits an error floor since the separability condition is not satisfied and thus is no longer consistent. We also conduct the same experiment when $\rho_1$, $\rho_2$ and $\rho_3$ are set respectively to $1,1.5$ and $2$. We note that in this case, the asymptotic gap with Mestre's estimator is further large (See Fig. 4).

In the second experiment, we verify by simulations the accuracy of the Gaussian approximation. We consider the case where there are two different eigenvalues $\rho_1=1$ and $\rho_2=3$ that are uniformly distributed. Unlike the first experiment, we assume that the multiplicities are not knwon. 
We represent in Fig \ref{fig:clt} the histogram for $\hat{\rho}_1$ and $\hat{\rho}_2$ when $N=60$ and $M=120$.
We also represent in red line, the corresponding Gaussian distribution. We note that as it was predicted by our derived results,
the histogram is similar to that of a Gaussian random variable.

\begin{figure}
\begin{center}
\includegraphics[width=0.5\textwidth]{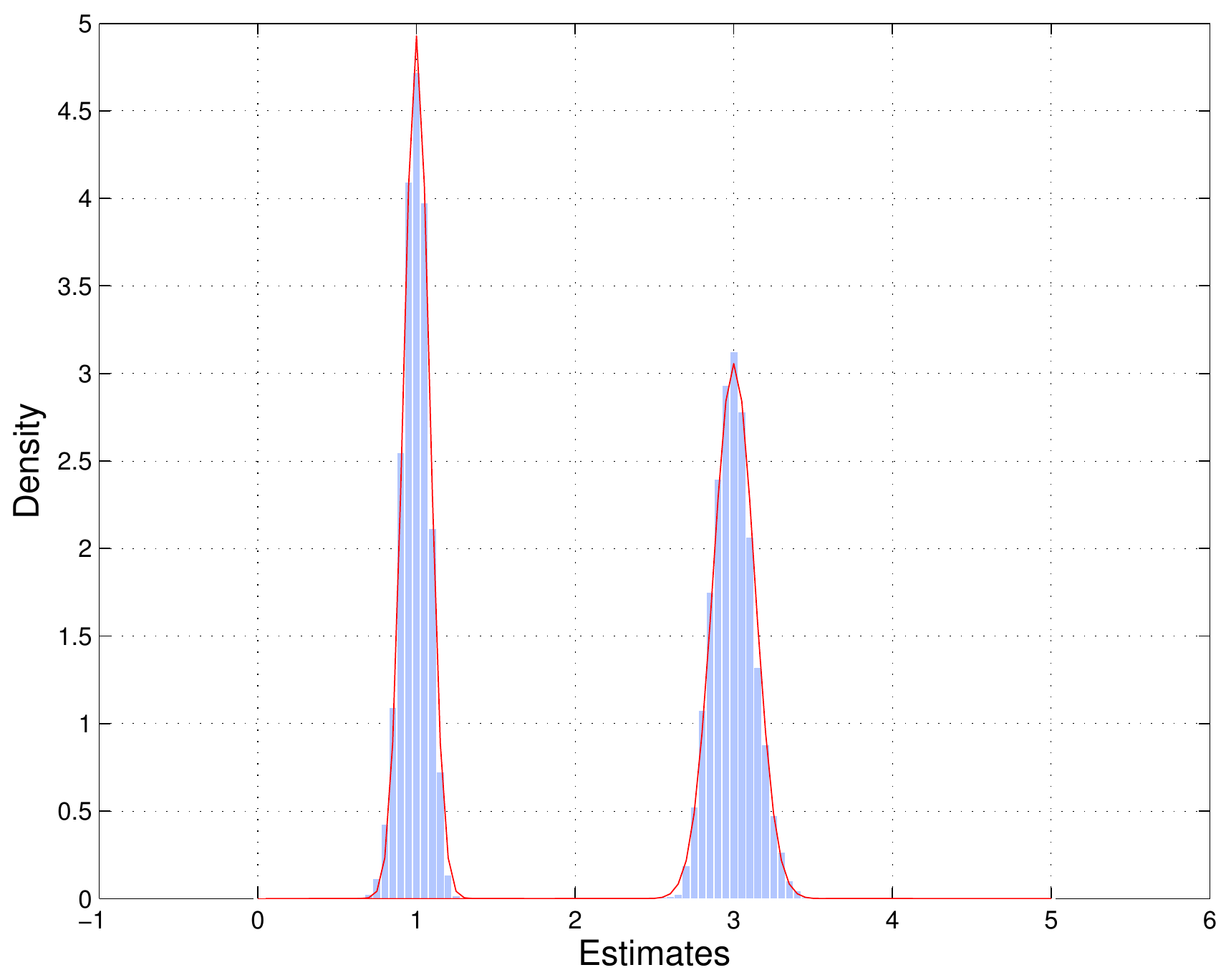}
\end{center}
\caption{Comparison of empirical against theoretical variances for $c_1=c_2=0.5$ and $\rho_1=1$ and $\rho_2=3$}
\label{fig:clt}
\end{figure}

\section{Discussion}\label{sec:discuss}

The present work is a theoretical contribution to the important
problem of estimating the covariance matrices of large dimensional
data. Two important assumptions (separability condition, exact
knowledge of the multiplicity) have been in particular relaxed with
respect to previous work. From a numerical point of view, it should be
noticed however, that the situation is more contrasted: If the
multiplicities are known, previous simulations show good performance;
if not, then one needs to enlarge the dimension of the observations to
achieve a good performance. Moreover, if the eigenvalues of ${\bf
  R}_N$ are far away from each other, then only the largest eigenvalue is
well-estimated because in the expression of the moments, the term
corresponding to the largest eigenvalue prevails. On the other hand,
if the eigenvalues are too close to each other, matrix
$\boldsymbol{\Gamma}$ is ill-conditioned, thus enlarging the induced
error. These phenomenas are inherent to the moment method, and
preliminary studies show that using trigonometric moments might help
mitigating these numerical problems.


\label{sec:conclusion}
\appendices
\section{Proof of lemma \ref{lem:moment}}
\label{app:proof_lemma}
By Cauchy's formula, write:
$$
\sum_{k=1}^L \frac{N_k}{N} \rho_k^{\ell}=\frac{1}{2{\bf i}\pi N}\oint_{\Gamma}\sum_{r=1}^L \frac{N_r\omega^{\ell}}{\omega-\rho_r} d\omega,
$$
where $\Gamma$ is a counterclockwise oriented contour that circles all eigenvalues $\{\rho_1,\cdots, \rho_L\}$. Performing the changing variable $\omega=-\frac{1}{\underline{m}_N(z)}$ in the same manner as in \cite{MES08b}, we get:
$$
\sum_{k=1}^L \frac{N_k}{N} \rho_k^{\ell}=\frac{(-1)^{\ell+1}}{2{\bf i}\pi N}\oint_{\mathcal{C}} \sum_{r=1}^L \frac{N_r\underline{m}_N'(z) dz}{\underline{m}_N^{\ell+1}(z)\left(\rho_r\underline{m}_N(z)+1\right)},
$$
where the contour ${\mathcal C}$ is counterclockwise oriented which contains the whole support ${\mathcal S}$.

From \eqref{eq:equivalent}, we can establish that:
$$
m_N(z)=-\frac{1}{Nz}\sum_{r=1}^L \frac{N_r}{1+\rho_r\underline{m}_N(z)},
$$
thus yielding:
\begin{equation}
\sum_{k=1}^L \frac{N_k}{N}\rho_k^{\ell}=\frac{(-1)^{\ell}}{2{\bf i}\pi}\oint_{\mathcal{C}}\frac{z\underline{m}_N'(z)}{\underline{m}_N^{\ell+1}(z)}m_N(z) dz.
\label{eq:previous}
\end{equation}
Plugging the relation:
$$
m_N(z)=\frac{M}{N}\underline{m}_N(z)+\frac{M(1-\frac{N}{M})}{Nz}
$$
into \eqref{eq:previous}, we obtain:
\begin{equation}
\sum_{k=1}^L \frac{N_k}{N}\rho_k^{\ell}=\frac{(-1)^{\ell}}{2{\bf i}\pi}\oint_{\mathcal{C}}\frac{Mz\underline{m}_N'(z) dz}{N\underline{m}_N^{\ell}(z)}+\frac{(-1)^{\ell}}{2{\bf i}\pi}\oint_{\mathcal{C}}\frac{M(1-\frac{N}{M})\underline{m}_N'(z)}{N\underline{m}_N^{\ell+1}(z)}dz.
\label{eq:fund}
\end{equation}
Since $\frac{\underline{m}_N'(z)}{\underline{m}_N^{\ell+1}(z)}$ is the derivative of $-\frac{1}{\ell\underline{m}_N^{\ell}(z)}$,
$$
\oint_{\mathcal{C}}\frac{\underline{m}_N'(z)}{\underline{m}_N^{\ell+1}(z)}dz=0.
$$
The second term on the right hand side of \eqref{eq:fund} is then equal to zero. It remains thus to deal with $\oint_{\mathcal{C}} \frac{z\underline{m}_N'(z)}{\underline{m}_N^{\ell}(z)}$. If $\ell\geq 2$, by integration by parts, we obtain:
$$
\oint_{\mathcal{C}}\frac{z\underline{m}_N'(z)}{\underline{m}_N^{\ell}(z)}dz =\frac{1}{\ell-1}\oint_{\mathcal{C}}\frac{dz}{\underline{m}_N^{\ell-1}(z)}.
$$
We thus obtain:
\begin{equation} \label{eq:integral}
\sum_{k=1}^L \frac{N_k}{N}\rho_k^{\ell}=\frac{M(-1)^{\ell}}{2{\bf i}\pi N (\ell-1)}\oint_{\mathcal{C}}\frac{dz}{\underline{m}_N^{\ell-1}(z)}.
\end{equation}
Finally, we propose to substitute the unknown term $\underline{m}_N(z)$ by its asymptotic equivalent $\mrnhat(z)$. Let $\hat{\gamma}_0,\cdots,\hat{\gamma}_{2L-1}$ the real quantities given by:
$$
\begin{array}{lll}
 \hat{\gamma}_0&=&1,\\
\hat{\gamma}_1&=&-\frac{M}{2N{\bf i}\pi}\oint_{\mathcal{C}}\frac{z\mrnhat'(z)}{\mrnhat(z)}dz,\\
&\vdots &\\
\hat{\gamma}_{2L-1}&=&\frac{M(-1)^{2L-1}}{2N(2L-1){\bf i}\pi}\oint_{\mathcal{C}}\frac{dz}{\mrnhat^{2L-1}(z)}.
 \end{array}
 $$
Then, by the dominated convergence theorem and the fact that with probability one \cite[Section 9.12]{SIL06}, $$
\inf_{z\in\mathcal{C}}|\underline{m}_N(z)|>0
$$
and
 $$
\inf_{z\in\mathcal{C}}|\mrnhat(z)|>0,
$$
one obtains: for all $k\geq 2$,
$$\left| \int_{\mathcal C} \frac{dz}{\underline{m}_N^{k-1}(z)}- \int_{\mathcal C} \frac{dz}{m_{\underline{\hat{\bf R}}_N}^{k-1}(z)} \right| \xrightarrow[]{a.s.} 0$$ and
$$\left| \int_{\mathcal C} \frac{m_{\underline{\hat{\bf R}}_N}'(z)dz}{m_{\underline{\hat{\bf R}}_N}(z)} - \int_{\mathcal C} \frac{\underline{m}_N'(z)dz}{\underline{m}_N(z)} \right| \xrightarrow[]{a.s.} 0.$$
Consequently:
$$
\hat{\gamma}_i-\gamma_i\xrightarrow[N,M\to\infty]{a.s.} 0.
$$

\section{Calculation of the variance}
\label{app:variance}
In this section, we will show the calculations of the variance matrix ${\bf V}$. The computation of ${\bf V}_{1,1}$ has been carried out in \cite{YAO11journal} where it was shown that:
$$
{\bf V}_{1,1}= -\frac{1}{4\pi^2c^2}\oint_{\mathcal{C}_1}\oint_{\mathcal{C}_2} \left[\frac{\underline{m}'(z_1)\underline{m}'(z_2)} {\left(\underline{m}(z_1)-\underline{m}(z_2)\right)^2}-\frac{1}{(z_1-z_2)^2}\right] \frac{1}{\underline{m}(z_1)\underline{m}(z_2)}dz_1 dz_2,
$$
with ${\mathcal C_1}$ and ${\mathcal C_2}$ defined in the theorem.
Using the fact that $\inf_{z\in\mathcal{C}}|\underline{m}(z)|>0$ together with Fubini's theorem, the quantity ${\bf V}_{k,\ell}$ for $k\geq 2, \ell\geq 2$, becomes:
$$
{\bf V}_{k,\ell}=-\frac{(-1)^{k+\ell}}{4\pi^2c^2}\oint_{\mathcal{C}_1}\oint_{\mathcal{C}_2}\mathbb{E}\left[X(z_1)X(z_2)\right]\underline{m}^{-k}(z_1)\underline{m}^{-\ell}(z_2)dz_1 dz_2.
$$
Substituting $\mathbb{E}\left[X(z_1)X(z_2)\right]$ by $\kappa(z_1,z_2)$, we obtain:
$$
{\bf V}_{k,\ell}=-\frac{(-1)^{k+\ell}}{4\pi^2c^2}\oint_{\mathcal{C}_1}\oint_{\mathcal{C}_2}
\left[\frac{\underline{m}'(z_1)\underline{m}'(z_2)}{\left(\underline{m}(z_1)-\underline{m}(z_2)\right)^2}-\frac{1}{(z_1-z_2)^2}\right]\frac{1}{\underline{m}^k(z_1) \underline{m}^{\ell}(z_2)}dz_1 dz_2.
$$
Finally, it remains to compute ${\bf V}_{k,1}$. Expanding $\Upsilon(X,Y,\underline{m},\underline{m}')$ and $\Phi_k(X,\underline{m})$, we obtain:
\begin{align*}
{\bf V}_{k,1}&= -\frac{(-1)^{k+1}}{4\pi^2c^2} \oint_{\mathcal{C}_1}\oint_{\mathcal{C}_2} \left[ \frac{z_2}{\underline{m}(z_2)\underline{m}^{k}(z_1)}\mathbb{E}\left[X(z_1)X'(z_2)\right]dz_1dz_2- \frac{\underline{m}'(z_2)}{\underline{m}(z_2)^2\underline{m}^{k}(z_1)}\mathbb{E}\left[X(z_1)X(z_2)\right] \right] dz_1 dz_2  \\
&=-\frac{(-1)^{k+1}}{4\pi^2c^2}\left(\oint_{\mathcal{C}_1}\oint_{\mathcal{C}_2}\frac{z_2\partial_2\kappa(z_1,z_2)}{\underline{m}(z_2)\underline{m}(z_1)^{k}}dz_1dz_2-\oint_{\mathcal{C}_1}\oint_{\mathcal{C}_2} \frac{\underline{m}'(z_2)\kappa(z_1,z_2)}{\underline{m}^2(z_2)\underline{m}^k(z_1)}dz_1 dz_2\right).
\end{align*}
By integration by parts, we obtain:
$$
\oint_{\mathcal{C}_2}\frac{z_2\partial_2\kappa(z_1,z_2)}{\underline{m}(z_2)\underline{m}^k(z_1)} dz_2= -\oint_{\mathcal{C}_2}\frac{\kappa(z_1,z_2)}{\underline{m}(z_2)\underline{m}^k(z_1)}dz_2+\oint_{\mathcal{C}_2}\frac{\underline{m}'(z_2)\kappa(z_1,z_2)}{\underline{m}(z_2)^2\underline{m}^k(z_1)}dz_2.
$$
Hence,
$$
{\bf V}_{k,1}=-\frac{(-1)^{k+1}}{4\pi^2c^2}\oint_{\mathcal{C}_1}\oint_{\mathcal{C}_2}\frac{\kappa(z_1,z_2) dz_1 dz_2}{\underline{m}(z_2)\underline{m}^k(z_1)}.
$$
This extends the expression of ${\bf V}_{k,l}$ for any $k,\ell \in\left\{1,\cdots,L-1\right\}$, thus yielding:
\begin{equation}
{\bf V}_{k,\ell}=-\frac{(-1)^{k+\ell}}{4\pi^2c^2}\oint_{\mathcal{C}_1}\oint_{\mathcal{C}_2}
\left[\frac{\underline{m}'(z_1)\underline{m}'(z_2)}{\left(\underline{m}(z_1)-\underline{m}(z_2)\right)^2}-\frac{1}{(z_1-z_2)^2}\right]\frac{1}{\underline{m}^k(z_1)\underline{m}^{\ell}(z_2)}dz_1 dz_2.
\end{equation}

\vspace{8mm}

\bibliography{IEEEabrv,IEEEconf,tutorial_RMT}

\end{document}